\documentclass[conference]{IEEEtran}
\usepackage[dvips]{graphicx}
\usepackage{subfigure}
\usepackage{mdwlist}
\usepackage{amsmath}
\usepackage{amsthm}
\usepackage{algorithm}
\usepackage{algorithmic}
\usepackage{cite}
\usepackage{amssymb}
\usepackage{multirow}
\usepackage{makecell}
\usepackage{enumerate}

\newtheorem{theorem}{Theorem}
\newtheorem{lemma}{Lemma}


\hyphenation{op-tical net-works semi-conduc-tor}

\begin{document}

\title{Generalized Lottery Trees: Budget-Consistent Incentive Tree Mechanisms for Crowdsourcing}
\author{\IEEEauthorblockN{Dong Zhao, Huadong Ma, and Xinna Ji}
\IEEEauthorblockA{Beijing Key Laboratory of Intelligent Telecommunications Software and Multimedia,\\Beijing University of Posts and Telecommunications, China\\
{\{dzhao, mhd\}@bupt.edu.cn, jxngood@163.com}
}
}
\maketitle

\begin{abstract}
Incentive mechanism design has aroused extensive attention for crowdsourcing applications in recent years. Most research assumes that participants are already in the system and aware of the existence of crowdsourcing tasks. Whereas in real life scenarios without this assumption, it is a more effective way to leverage \emph{incentive tree} mechanisms that incentivize both users' direct contributions and solicitations to other users. Although some such mechanisms have been investigated, we are the first to propose \emph{budget-consistent} incentive tree mechanisms, called \emph{generalized lottrees}, which require the total payout to all participants to be consistent with the announced budget, while guaranteeing several other desirable properties including \emph{continuing contribution incentive}, \emph{continuing solicitation incentive}, \emph{value proportional to contribution}, \emph{unprofitable solicitor bypassing}, and \emph{unprofitable sybil attack}. Moreover, we present three types of \emph{generalized lottree} mechanisms, \emph{1-Pachira}, \emph{$K$-Pachira}, and \emph{Sharing-Pachira}, which support more diversified requirements. A solid theoretical guidance to the mechanism selection is provided as well based on the Cumulative Prospect Theory. Both extensive simulations and realistic experiments with 82 users have been conducted to confirm our theoretical analysis.
\end{abstract}

\section{Introduction}
\label{sec:introduction}
In recent years, crowdsourcing has become one of the most popular distributed problem-solving model in which a crowd of undefined size is engaged to solve a complex problem through an open call \cite{chatzimilioudis2012crowdsourcing}, enabling numerous applications such as reviewing and voting items at \emph{Amazon} \cite{website:amazon} and \emph{Yelp} \cite{website:yelp}, sharing knowledge at \emph{Yahoo! Answers} \cite{website:Yahoo!Answers} and \emph{Zhihu} \cite{website:Zhihu}, creating maps at \emph{OpenStreetMap} \cite{website:OpenStreetMap}, and labeling images with the \emph{ESP} game \cite{von2004labeling}. The prevalence of these crowdsourcing applications should give the credit to various intrinsic incentives such as social, service, entertainment and ethical. On the other hand, monetary (extrinsic) incentives are leveraged by general crowdsourcing platforms like \emph{Amazon Mechanical Turk} (\emph{MTurk}) \cite{website:mturk} and \emph{Taskcn} \cite{website:taskcn} to recruit online workers for accomplishing various tasks such as data annotation, text translation, and identifying objects in a photo or video.

The proliferation of mobile sensing devices (e.g., smartphones, wearable devices, in-vehicle sensing devices) offers a new sensing paradigm as an important branch of ``crowdsourcing", which extends Web-based crowdsourcing to a larger mobile crowd, allowing to perform sensing tasks pervasively at larger scale and easier than traditional static sensor networks. This paradigm is often called as ``mobile crowd sensing" (a.k.a. ``participatory sensing", ``opportunistic sensing" with similar concepts), which has been adopted for various applications such as \emph{Sensorly} \cite{website:Sensorly}, \emph{NoiseTube} \cite{stevens2010crowdsourcing}, \emph{Common Sense} \cite{dutta2009common} for building large-scale urban sensing (network coverage, noise, and air quality) maps, \emph{Nericell} \cite{mohan2008nericell}, \emph{VTrack} \cite{thiagarajan2009vtrack} for traffic sensing, and \emph{FindingNemo} \cite{liu2014finding} for finding a lost child. Several general mobile crowdsourcing platforms such as \emph{Gigwalk} \cite{website:gigwalk}, \emph{Jana} \cite{website:jana} and \emph{Weichaishi} \cite{website:weichaishi} have recruited millions of users to participate in various mobile tasks such as conducting consumer research, launching product promotions, and creating consumer loyalty campaigns. It is always indispensable to provide proper incentives for compensating user's participating costs, including users' time, various non-negligible resources (e.g., computation, storage, and battery energy) and potential treats of leaking location privacy.

Extensive research has been conducted to design incentive mechanisms for crowdsourcing \cite{yang2012crowdsourcing,duan2012incentive,lee2010sell,jaimes2012location,zhang2015incentivize,zhang2015truthful,zhao2016budget,zhao2017frugal,guo2017taskme}. 
Most of existing mechanisms assume that participants are already in the system and aware of the existence of crowdsourcing tasks. However, they neglect two key facts: first, participants do not exist in the system from the beginning; second, even if there are many registered users in a crowdsourcing platform such as \emph{MTurk} and \emph{Gigwalk}, most of these potential participants are hard to timely know the existence of tasks, as many users tend to prohibit automatic task push, or turn a blind eye to tasks for saving precious time, or the platform cannot push tasks to users in right locations if users prohibit reporting their real-time GPS coordinates. On the other hand, it is a reasonable assumption that a small number of users are the first to participate in the task. For example, a user may be browsing task lists actively and decide to participate in. For another, a user may just report his location so that the platform pushes the task to him timely. Thus, a more sensible and effective method is to leverage the ``word-of-mouth" effect, namely to encourage these early participants to refer other users from their social networks such as \emph{Facebook} \cite{website:facebook}, \emph{Twitter} \cite{website:twitter} and \emph{WeChat} \cite{website:wechat}, or from their neighboring community by opportunistic networking \cite{ma2014opportunities}.

Incentive tree (a.k.a. referral tree, multi-level marketing, affiliate marketing, and direct marketing) mechanisms provide an effective way to address the aforementioned requirements. An incentive tree is a tree-structured incentive mechanism in which i) each user is rewarded for direct contributions, and in addition, ii) a user who has already participated in can make referrals, and solicit new users to also participate in and make contributions. The mechanism incentivizes solicitations by making a solicitor's reward depend on the contributions (and also on their further solicitations in a recursive manner) from such solicitees \cite{lv2016fair}. One infamous incentive tree mechanism is the \emph{Pyramid Scheme} \cite{website:pyramid}, which offers promising rewards for solicitation although being illegal in many countries. Another well-known application example is the \emph{DARPA Red Balloon Challenge}, in which an MIT team won the challenge by using a simple incentive tree mechanism \cite{pickard2011time}. However, this mechanism has a serious drawback -- not robust against sybil attacks.
At present, many incentive tree mechanisms have been designed against sybil attacks \cite{emek2011mechanisms,drucker2012simpler,chen2013sybil,zhang2015sybil,lv2016fair,zhang2017robust,douceur2007lottery}, but most of them lack a budget constraint so that participants have ``unbounded reward opportunity" (a property defined in \cite{emek2011mechanisms,lv2016fair}). In fact, the crowdsourcer (i.e., crowdsourcing task organizer) often has a certain budget constraint in realistic scenarios, which also represents the mainstream incentive type in existing crowdsourcing platforms.

In this paper, we aim to design a class of budget-consistent incentive tree mechanisms satisfying six desirable properties: \emph{budget consistency} (\emph{BC}), \emph{continuing contribution incentive} (\emph{CCI}), \emph{continuing solicitation incentive} (\emph{CSI}), \emph{value proportional to contribution} (\emph{VPC}), \emph{unprofitable solicitor bypassing} (\emph{USB}), and \emph{unprofitable sybil attack} (\emph{USA}). The latter five properties are commonly considered by the existing work \cite{emek2011mechanisms,drucker2012simpler,chen2013sybil,zhang2015sybil,lv2016fair,zhang2017robust,douceur2007lottery} so that the mechanism encourages contribution, solicitation, and fair play.
Besides, we emphasize the importance of the property BC, which requires that the total payout to all participants should be consistent with the budget announced by the crowdsourcer at the time of task distribution, namely that the total payout is just equal to the budget, rather than less than the budget. Otherwise, if the total payout can be cut arbitrarily, then participants will not trust the crowdsourcer, resulting in the decline of participation enthusiasm.

To the best of our knowledge, only the work \cite{douceur2007lottery} designs a class of incentive tree mechanisms with budget constraint: \emph{lottery tree} (\emph{lottree}) mechanisms, which select one participant as the unique recipient of the payout with a probability computed by a lottery function. The \emph{Pachira lottree} is proposed to satisfy CCI, CSI, VPC, USB and USA. However, it violates BC. Moreover, it allows only one winner (for the sake of distinction, we call it \emph{1-Pachira lottree} in the rest of the paper), which is not always effective for all scenarios. In fact, it is non-trivial to adjust the \emph{1-Pachira lottree} for satisfying BC while not violating other properties, or extend it to generalized mechanisms with multiple winners, as we will elaborate later. By contrast, we propose an effective strategy to rescale the \emph{1-Pachira lottree}, and prove it satisfies all the six desirable properties. Furthermore, we design \emph{generalized Pachira lottree} mechanisms, including \emph{$K$-Pachira lottree} that allows multiple winners, and \emph{Sharing-Pachira lottree} that allows each participant to be a winner. In \emph{Sharing-Pachira lottree}, all participants proportionally share the budget based on their respective winning probabilities.

Now another key and interesting question is: which mechanism is best among the \emph{1-Pachira}, \emph{$K$-Pachira} and \emph{Sharing-Pachira lottrees}? Some recent studies \cite{reddy2010examining,musthag2011exploring,celis2013lottery,rula2014no,rokicki2014competitive,rokicki2015groupsourcing} have been conducted to compare lottery-based (a.k.a. randomized reward) and fixed payment (a.k.a. micro-payment, linear reward) mechanisms by real-world experiments. However, they lack a general and solid theoretical basis to account for their experimental results, and thus fail to provide a more persuasive guidance to the mechanism selection for different scenarios. Moreover, none of them considers incentive tree mechanisms. By contrast, we leverage the \emph{Cumulative Prospect Theory} (\emph{CPT}) \cite{tversky1992advances} to compare different \emph{generalized lottree} mechanisms by numerical analysis. This provides us an interesting and important theoretical guidance to the mechanism selection for satisfying various application requirements: \emph{If a crowdsourcer has a large budget constraint, or it only requires a small number of participants, then the \emph{Sharing-Pachira lottree} mechanism should be recommended, otherwise the \emph{1-Pachira lottree} mechanism should be recommended.}

Finally, in order to verify our theoretical analysis, we first build a social network based simulator and implement the three \emph{generalized lottree} mechanisms. Extensive simulations are conducted to confirm our theoretical analysis. Second, we investigate a typical application case: \emph{looking for lost objects}, and design an interesting experimental mobile game, \emph{Treasure Hunt}, to conduct extensive performance evaluations. 82 users register in our APP in 11 days, based on which 12 tasks are designed with different budget constraints and limits on the number of participants. Experimental results are also consistent with our theoretical analysis.

The main contributions of this paper are listed as follows:

\begin{itemize*}
\item
To the best of our knowledge, we are the first to investigate budget-consistent incentive tree mechanisms, while guaranteeing several desirable properties, CCI, CSI, VPC, USB and USA (Section II-B$\sim$C and III-A).
\item
We design \emph{generalized lottree} mechanisms in support of multiple winners, and provide theoretical guidance to the mechanism selection for satisfying different requirements by leveraging the CPT (Section III-B$\sim$E).
\item
We evaluate various mechanisms by both extensive simulations and realistic experiments to confirm our theoretical analysis, and present a typical application case (Section IV$\sim$V).
\end{itemize*} 

\section{Problem Formulation and Preliminaries}
\label{sec:problem formulation}
In this section, we first present the crowdsourcing model, the formal definition of a \emph{generalized lottree}, and the desirable properties. We next provide some preliminaries, including the \emph{1-Pachira lottree} and the CPT.

\subsection{Crowdsourcing Model}
\label{subsec:crowdsourcing model}
Suppose that there is a \emph{crowdsourcer} who requires to recruit $N$ users to participate in a crowdsourcing campaign with a budget constraint $B$. The crowdsourcer may have a specific limit on the number $N$, especially when it only requires a small number of participants. For instance, the participatory sensing data collection applications \cite{reddy2010examining}, \emph{GarbageWatch}, \textit{What's Bloomin}, and \textit{AssetLog}, only require a few motivated users to document various resource use issues at a university by taking geo-tagged photos of various resources, like outdoor waste bins, water usage of plants, bicycle racks, recycle bins, and charge stations.
Certainly, many crowdsourcing applications expect as many participants as possible, so that they often have no explicit limit on the number $N$. For instance, a large number of participants are required to build a large-scale urban sensing maps \cite{website:Sensorly,stevens2010crowdsourcing,dutta2009common} or find a lost child quickly \cite{liu2014finding}. In fact, both the number of required participants and the budget constraint have nonnegligible impacts on the effectiveness of different incentive mechanisms, as we will elaborate later.

On the other hand, users can participate in a crowdsourcing campaign and contribute to it (e.g., solving tasks, uploading sensing data, finding balloons or a lost child). Both the homogeneous and heterogeneous user models are considered \cite{zhao2017frugal}, where the former is a special case of the latter. The former can account for atom tasks where each user can complete only a single task and thus make the same contribution. For instance, \emph{Gigwalk} \cite{website:gigwalk} recruits users who are in a shopping mall for conducting consumer research, where each user can complete only one questionnaire. The latter can account for divisible tasks or crowdsourcing campaigns that users could continuously participate in, where different users may complete different numbers of tasks or participate in a campaign for different durations, resulting different contributions. For instance, Microsoft has recruited users to add panoramic images to its Bing Map results through \emph{Gigwalk} \cite{website:gigwalk}, where different users are willing to take different amounts of photos. For another, in \emph{FindingNemo} \cite{liu2014finding} different users may spend different amounts of time on looking for a lost child. Generally, the contribution of a user $u$ is denoted by $C(u)$, $C(u)\geq 0$.

Furthermore, users can also solicit new users. Such solicitations induce a tree $T$. Each user is represented as a tree node $u$, and there is a directed edge $(u,v)$ between two users $u$ and $v$ if $v$ has participated in the campaign in response to a solicitation by $u$. In other words, if $u$ participates in the campaign via a solicitation by $u$, it becomes a child-node of $v$ in $T$. The crowdsourcer is the root node $r$. The users who have participated in the campaign directly in response to the solicitation from the crowdsourcer are the child-nodes of $r$. $T_u$ denotes the subtree of $T$ rooted at node $u$. $\mathcal{E}(T)$ denotes the set of directed edges in $T$. $T$ is a weighted tree in which the weight of node $u$ is its contribution to the campaign, $C(u)$. Since the crowdsourcer has no direct contribution, we have $C(r)=0$. The total contribution of all nodes in $T$ is denoted by $C(T)=\sum_{u\in T}C(u)$.

\subsection{Generalized Lottree}
\label{subsec:generalized lottree}
A \emph{generalized}\footnotemark[1] \emph{lottree} is an incentive tree mechanism that leverages lotteries to probabilistically select one or multiple participants as the winner(s), and pay out a reward to each winner. One key component of a generalized lottree is a \emph{lottery} function $L(u)$ ($0\leq L(u) \leq 1$) that determines the \textit{lottery value} (i.e., winning probability) of each node $u \in T$, and satisfies $\sum_{u\in T}L(u)=1$. The lottery value of a node should depend on both the tree structure and nodes' contributions, so that both contributions and solicitations from participants are encouraged. Another key component is a \emph{reward} function $R(u)$ that determines the reward of each node $u \in T$, and satisfies $\sum_{u\in T}R(u)\leq B$. The reward of a node depends on the crowdsourcer's reward strategy, namely how many winners are allowed among $|T|-1$ participants. Specifically, three reward strategies are considered: \emph{1-lottree} with only one winner, \emph{$K$-lottree} with $K (1<K<|T|-1$) winners, and \emph{Sharing-lottree} that allows each participant to be a winner.

\footnotetext[1]{The original definition of ``lottree" allows multiple winners, but in fact such generalized cases are not considered throughout the paper \cite{douceur2007lottery}. Thus, the word ``generalized" is purposely used here to emphasize that the mechanism allows one or multiple winners.}

\subsection{Desirable Properties}
\label{subsec:desirable properties}
While the main objective of a \emph{generalized lottree} mechanism is to incentivize both contributions and solicitations under a certain budget constraint, it should also guarantee the fairness and be robust against various strategic behaviors by participants. In the following, we define the set of desirable properties that a \emph{generalized lottree} should ideally satisfy.

\textbf{Budget Consistency (BC):} A \emph{generalized lottree} satisfies \emph{BC} if the total reward to all nodes in the tree $T$ except the root node $r$ is consistent with the budget, i.e., $\sum_{u\in T\setminus \{r\}}R(u)=B$. This property has a stricter constraint than the so-called \emph{Zero Value to Root} (\emph{ZVR}) property defined in \cite{douceur2007lottery}. ZVR only requires that the reward to the root node of the tree is zero: $R(r)=0$, but allowing that the total payout is less than the budget. We argue that it is not sufficient, as it will lead to an uncommitted crowdsourcer and result in the decline of users' participation enthusiasm.

\textbf{Continuing Contribution Incentive (CCI):} A \emph{generalized lottree} satisfies \emph{CCI} if it provides nodes with increasing expected reward in response to increased contribution. Formally, given a tree $T$, if a node $u\in T$ increases its contribution, $C'(u)>C(u)$, and all other nodes $v\in T\setminus{\{u\}}$ maintain the same contribution, $C'(v)=C(v)$, then the expected reward of $u$ increases: $\mathbb{E}[R'(u)]>\mathbb{E}[R(u)]$.

\textbf{Continuing Solicitation Incentive (CSI):} A \emph{generalized lottree} satisfies \emph{CSI} if each node always has an incentive to solicit new nodes. We follow the notion of ``\emph{weak solicitation incentive} (\emph{WSI})" defined in \cite{douceur2007lottery}. Formally, if the subtree of a node $u \in T$ includes some node $p$: $p\in T_u$, but does not include some other node $q$: $q\in T\setminus T_u$, and there is a new node $n$: $C(n)>0$, which in case 1 joins the tree as a child of $p$, and in case 2 joins the tree as a child of $q$, then the reward of $u$ in case 1, denoted by $R'(u)$, is greater in expectation than that in case 2, denoted by $R''(u)$: $\mathbb{E}[R'(u)]>\mathbb{E}[R''(u)]$.

\textbf{Value Proportional to Contribution (VPC):} This property demands that the mechanism should maintain a notion of fairness among nodes, as intuitively participants expect that their rewards are proportional to their contributions. We say that a \emph{generalized lottree} satisfies $\varphi$-\emph{VPC} for some $\varphi >0$, if it ensures that the expected reward of each node $u$ is at least $\varphi$ times the relative contribution made by that node: $\mathbb{E}[R(u)]\geq \varphi C(u)/C(T)$.

\textbf{Unprofitable Solicitor Bypassing (USB):} A \emph{generalized lottree} satisfies \emph{USB} if a new node can never gain expected reward by joining the tree as a child of some node other than its solicitor. Violation of this property can result in undesirable consequences: participants will lose interest in soliciting new nodes, as new nodes tend to join the tree not as children of the nodes that solicited them. Formally, if nodes $u$ and $v$ are in the tree: $\{u,v\}\subset T$, and there is a new node $n$: $C(n)>0$, which in case 1 joins the tree as a child of $u$, and in case 2 joins the tree as a child of $v$, then the reward of $n$ in case 1, denoted by $R'(n)$, is not smaller in expectation than that in case 2, denoted by $R''(n)$: $\mathbb{E}[R'(n)]\geq \mathbb{E}[R''(n)]$, which, by symmetry, implies $\mathbb{E}[R'(n)]=\mathbb{E}[R''(n)]$.

\textbf{Unprofitable Sybil Attack (USA):} This property demands that no participant can gain lottery value by pretending to have multiple identities to join the tree as a set of \emph{Sybil} nodes instead of joining singly. Formally, the \emph{Sybil attack} is defined as follows: Given any node $u\in T$ whose parent is $p$ and who has $d\geq 0$ children $v_1, v_2, \ldots, v_d$, $u$ launches the Sybil attack by splitting itself into multiple replicas (i.e., Sybil nodes), denoted by $u_1, u_2, \ldots, u_s$ ($s>1$), $\sum_{i=1}^{s}{C(u_i)=C(u)}$; each Sybil node $u_i$ can only be a child of $p$, or a child of one of the other Sybil nodes; each node $v_j$ is a child of one Sybil node. The total expected reward for $u$ from this Sybil attack is $\sum_{i=1}^{s}{\mathbb{E}[R(u_i)]}$. We say that a \emph{generalized lottree} satisfies \emph{USA} if for any node $u\in T$, it cannot gain expected reward by any Sybil attack without making extra contributions: $\mathbb{E}[R(u)]\geq \sum_{i=1}^{s}{\mathbb{E}[R(u_i)]}$.

\subsection{1-Pachira Lottree}
\label{subsec:1-Pachira}
The \emph{1-Pachira lottree} has been proven to satisfy CCI, CSI, VPC, USB and USA \cite{douceur2007lottery}. In principle, the \emph{1-Pachira lottree} can be defined using any function $\pi$ that satisfies the following properties:
\begin{enumerate}[(i)]\setlength{\itemsep}{0.1cm}
\item $\pi(0)=0$, $\pi(1)=1$;
\item $\forall c\in [0,1]$: $\frac{d\pi(c)}{dc}\geq \beta\quad$ (minimum slope of $\beta$);
\item $\forall c\in [0,1]$: $\frac{d^2\pi(c)}{dc^2}>0\quad$ (strictly convex).
\end{enumerate}

In this paper, we follow a particularly convenient and intuitive function with the above properties:
\begin{equation}\label{eq-pi}
  \pi(c)=\beta c+(1-\beta)c^{1+\delta},
\end{equation}
where $\beta$ and $\delta$ are two input parameters that tradeoff solicitation incentive against fairness. Then for each node $u\in T$, a \emph{weight} is computed as the function $\pi$ applied to its proportional contribution: $W(u)=\pi(C(u)/C(T))$. Besides, the weight of a subtree $T_u$ is defined as
\begin{equation}\label{eq-weight}
W(T_u)=\pi \left(\frac{C(T_u)}{C(T)}\right).
\end{equation}
Specially, for any leaf node $u$, it holds that $W(u)=W(T_u)=\pi(C(u)/C(T))$. Finally, the \emph{1-Pachira lottree} determines the lottery value of each node $u\in T$ as the weight of the subtree rotted at $u$ minus the weights of all $u$'s child subtrees:
\begin{equation}\label{eq-lottery}
  L(u)=W(T_u)-\sum_{(u,v)\in \mathcal{E}(T)} W(T_v).
\end{equation}
Only one node obtains all the reward with probability $L(u)$.

\subsection{Cumulative Prospect Theory (CPT)}
\label{subsec:CPT}
An effective lottree mechanism should consider how people perceive the payout for different reward strategies based on the cognitive psychology of lottery gambling \cite{rogers1998cognitive}. For this purpose, the \emph{Prospect Theory}, a generally accepted economic model proposed by Kahneman and Tversky \cite{kahneman1979prospect}, can adequately describe how individuals evaluate losses and gains in lotteries instead of the expected utility theory. Furthermore, the \emph{Cumulative Prospect Theory} (\emph{CPT}) extends it to uncertain as well to risky prospects with any number of outcomes, which also confirms a distinctive fourfold pattern of risk attitudes: risk aversion for gains and risk seeking for losses of high probability; risk seeking for gains and risk aversion for losses of low probability \cite{tversky1992advances}. Note that only the gain case is considered for lottrees. Specifically, for a single outcome of the gain $x\geq 0$ with a probability $\mathfrak{p}$, the \emph{value} function and \emph{weighting} function are respectively defined based on a nonlinear transformation as follows:
\begin{equation}\label{eq-value}
  \nu(x)=x^\alpha,
\end{equation}
\begin{equation}\label{eq-weighting}
  \omega^+(\mathfrak{p})=\frac{\mathfrak{p}^\gamma}{(\mathfrak{p}^\gamma+(1-\mathfrak{p})^\gamma)^{1/\gamma}}.
\end{equation}
The \emph{cumulative prospect value (CPV)} is then computed as individuals' perceived gain:
\begin{equation}\label{eq-cpv}
  CPV(x, \mathfrak{p})=\nu(x)\omega^+(\mathfrak{p}).
\end{equation}

Furthermore, if there are a series of possible outcomes with gain-probability pairs $(x_i, \mathfrak{p}_i)$, $1\leq i \leq m$, then the \emph{cumulative decision weight} is defined by:
\begin{equation}\label{eq-cumulative-weight}
  \tau_i^+=\omega^+(\mathfrak{p}_i+\cdots+\mathfrak{p}_m)-\omega^+(\mathfrak{p}_{i+1}+\cdots+\mathfrak{p}_m), 1\leq i < m,
\end{equation}
\begin{equation}\label{eq-cumulative-weight-m}
  \tau_m^+=\omega^+(\mathfrak{p}_m).
\end{equation}
The CPV is then computed as:
\begin{equation}\label{eq-cpv-2}
  CPV((x_i, \mathfrak{p}_i), 1\leq i \leq m)=\sum_{i=1}^m \tau_i^+\nu(x_i).
\end{equation} 

\section{Generalized Pachira Lottree}
\label{sec:generalized pachira tree}
In this section, we first present how to rescale the \emph{1-Pachira lottree} for guaranteeing all the desirable properties. Note that, since each node's reward depends only on its own lottery value for the \emph{1-Pachira lottree}, it is only required to consider the lottery value for analyzing various properties. We then extend it to \emph{K-Pachira lottree} and \emph{Sharing-Pachira lottree}, respectively. Finally, we analyze how to select mechanisms based on the CPT, and give an important theoretical guideline.

\subsection{Rescaling 1-Pachira Lottree}
\label{subsec:rescaling}
The \emph{1-Pachira lottree} does not satisfy ZVR, because the root node can obtain the reward with probability $L(r)>0$. It also violates BC, as ZVR is a necessary but not sufficient condition for BC. It is a straightforward strategy to rescale the lottree for satisfying ZVR by distributing the root's lottery value to the other nodes. Interestingly, however, it is non-trivial for a rescaling strategy to ensure the desirable properties, especially for BC, USB and USA. For instance, a rescaling strategy is proposed by \cite{douceur2007lottery} to distribute the root's lottery value among the other nodes in proportion to their lottery values (Fig. \ref{fig_rescaling2}), but it violates USB. Although another rescaling strategy is further used to satisfy USB, it still violates BC. Intuitively, a crowdsourcer may tend to use the other two rescaling strategies: one is to distribute the root's lottery value to nodes who are at a lower level of the tree, e.g., the second-level nodes who join the tree directly in response to the solicitation from the crowdsourcer (Fig. \ref{fig_rescaling3}); the other is to distribute the root's lottery value to nodes who join the tree earlier, e.g., the first two nodes who join the tree regardless of the tree structure (Fig. \ref{fig_rescaling4}). Both the two strategies encourage users to participate in the campaign as soon as possible, which is also a good property for an incentive mechanism. In the following, we define two more general rescaling strategies.
\begin{figure*}[!t]
  \leftline{
  \subfigure[Before rescaling]{
    \label{fig_rescaling1}
    \includegraphics[height=0.9in]{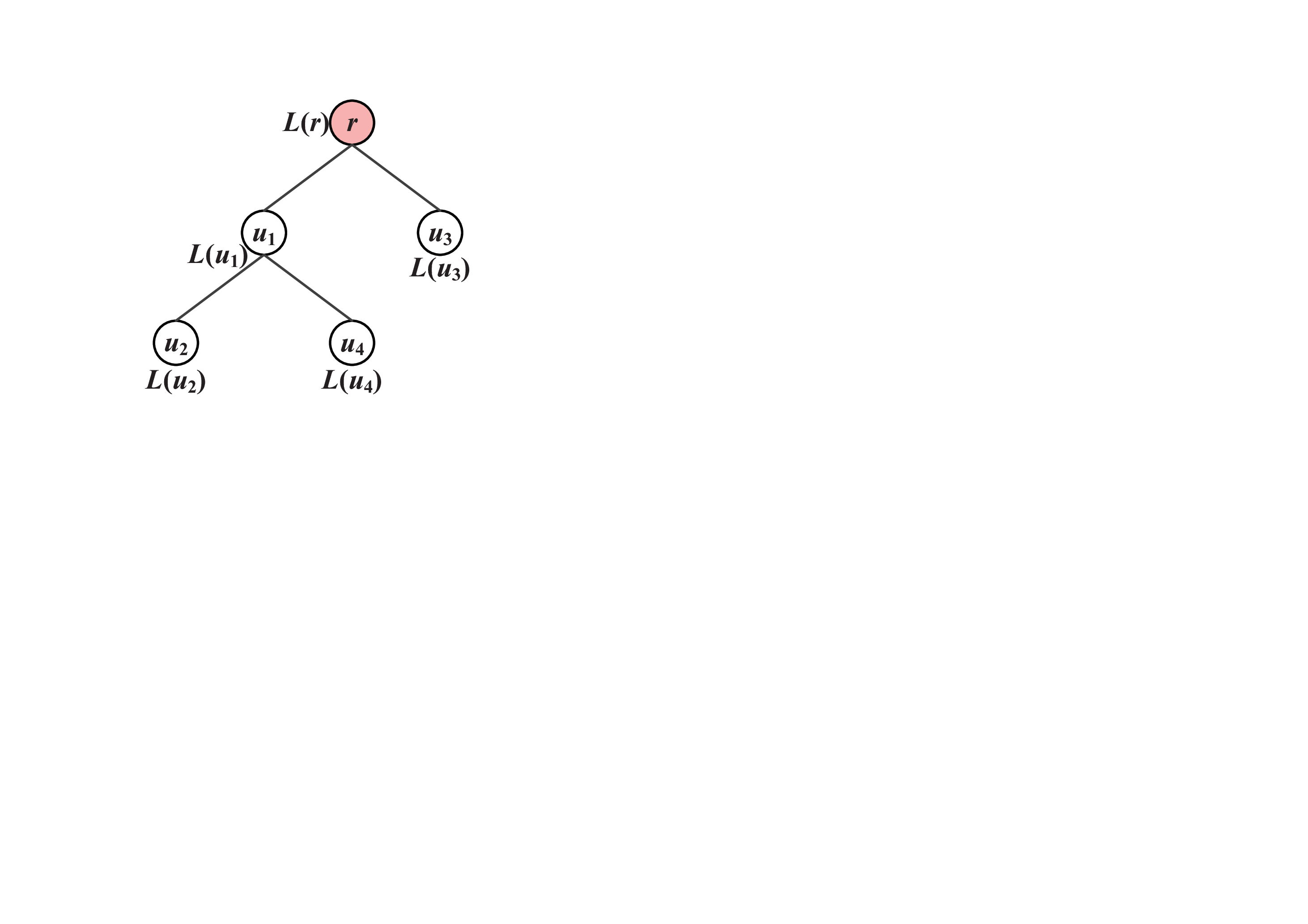}\vspace{-15pt}}
  \subfigure[After rescaling of \cite{douceur2007lottery}]{
    \label{fig_rescaling2}
    \includegraphics[height=0.9in]{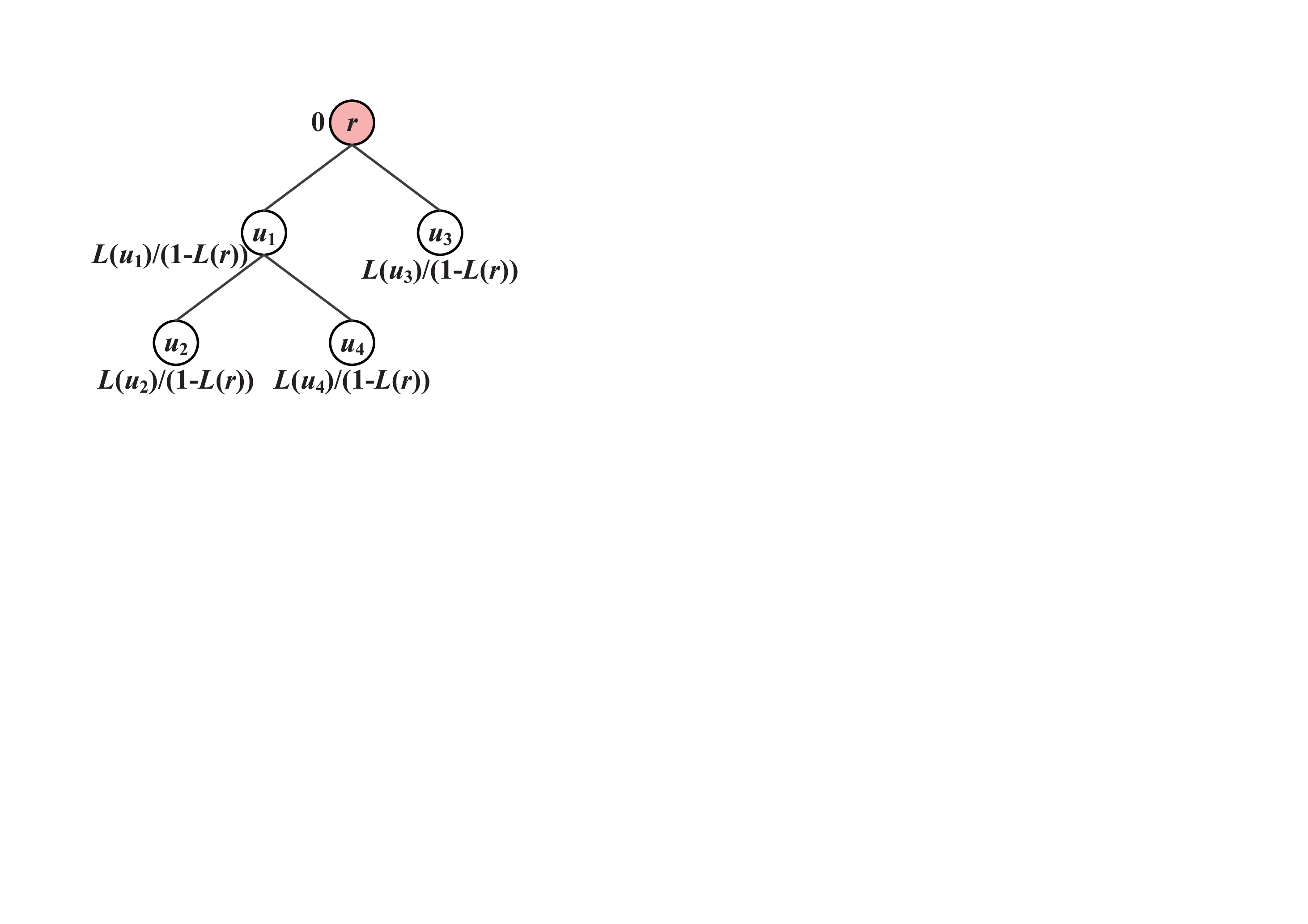}\vspace{-15pt}}
  \subfigure[After structure-dependent rescaling ($w_1+w_3=1$)]{
    \label{fig_rescaling3}
    \includegraphics[height=0.9in]{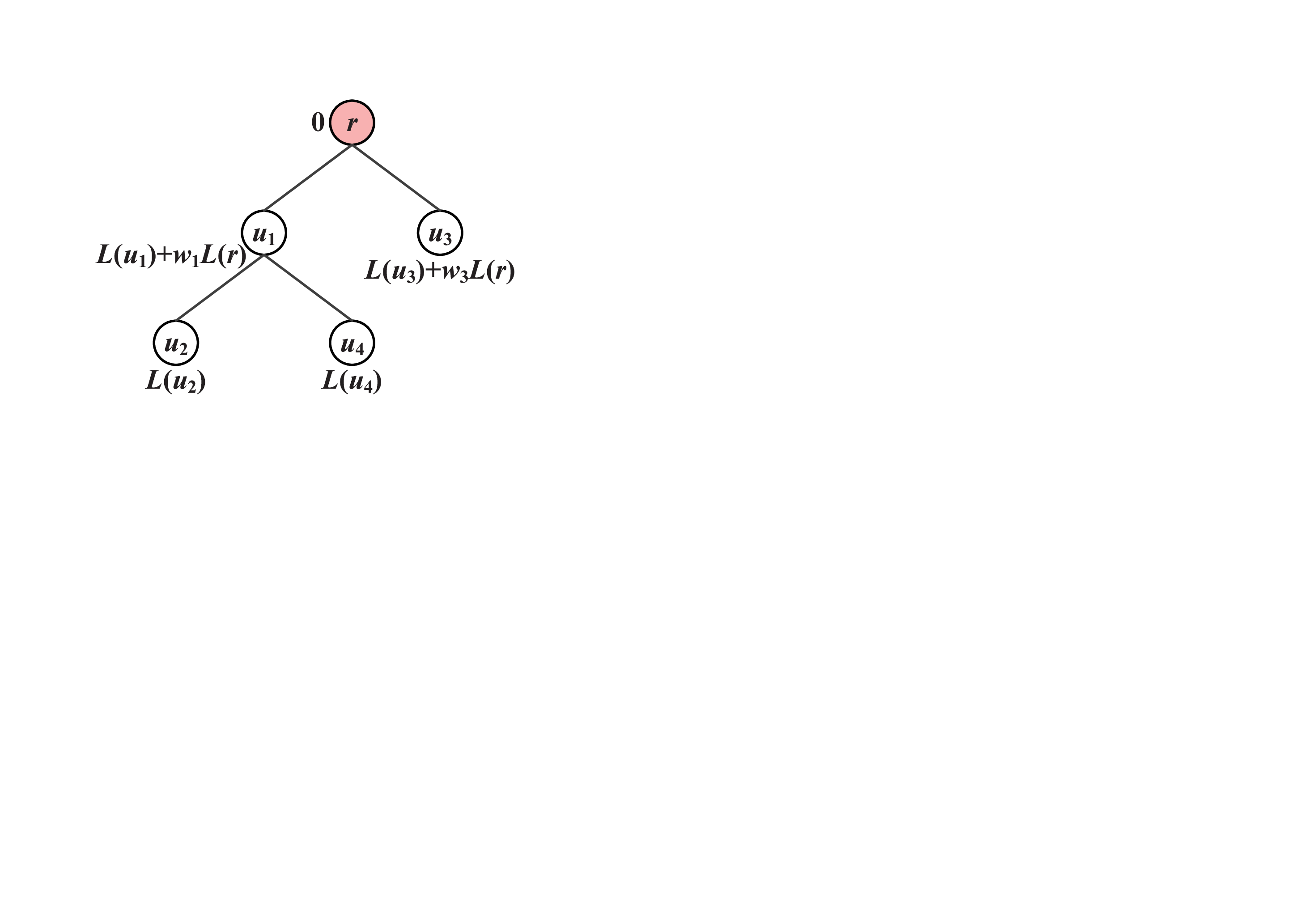}\vspace{-15pt}}
  \subfigure[After time-dependent rescaling ($w_1+w_2=1$)]{
    \label{fig_rescaling4}
    \includegraphics[height=0.9in]{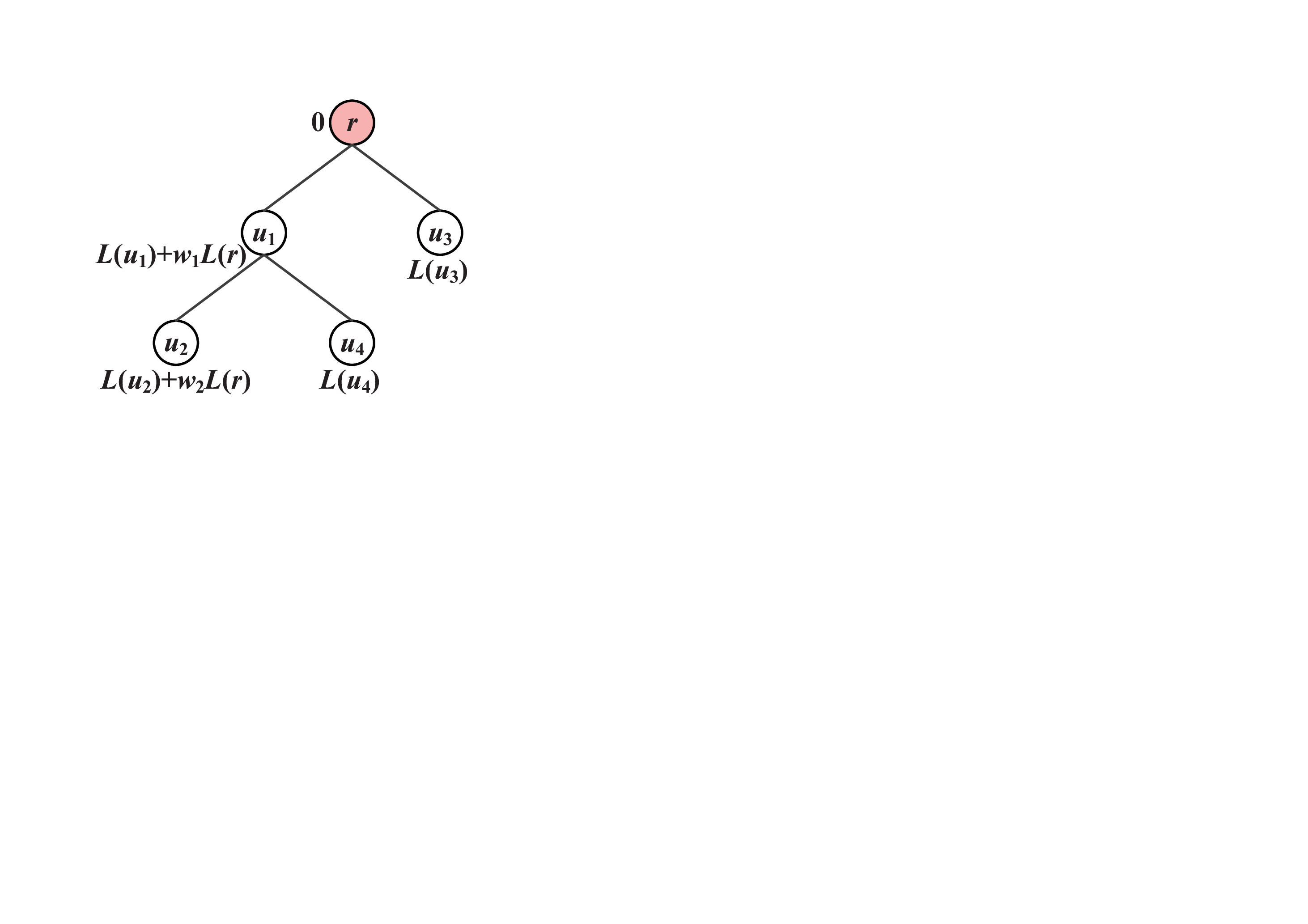}\vspace{-15pt}}
  \subfigure[After first-is-root rescaling]{
    \label{fig_rescaling5}
    \includegraphics[height=0.9in]{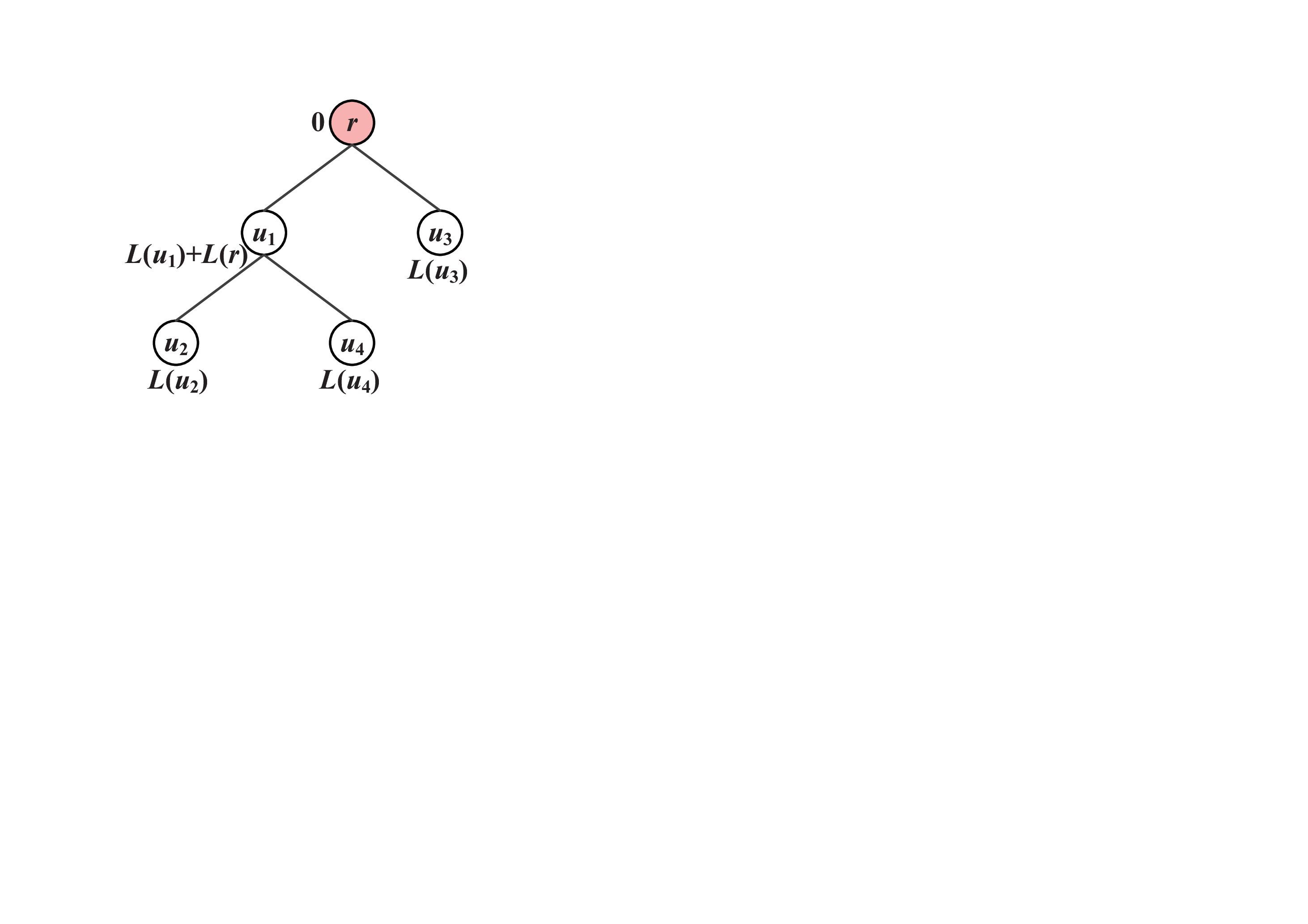}\vspace{-15pt}}
  \subfigure[Equivalent of the lottree in (e)]{
    \label{fig_rescaling6}
    \includegraphics[height=0.6in]{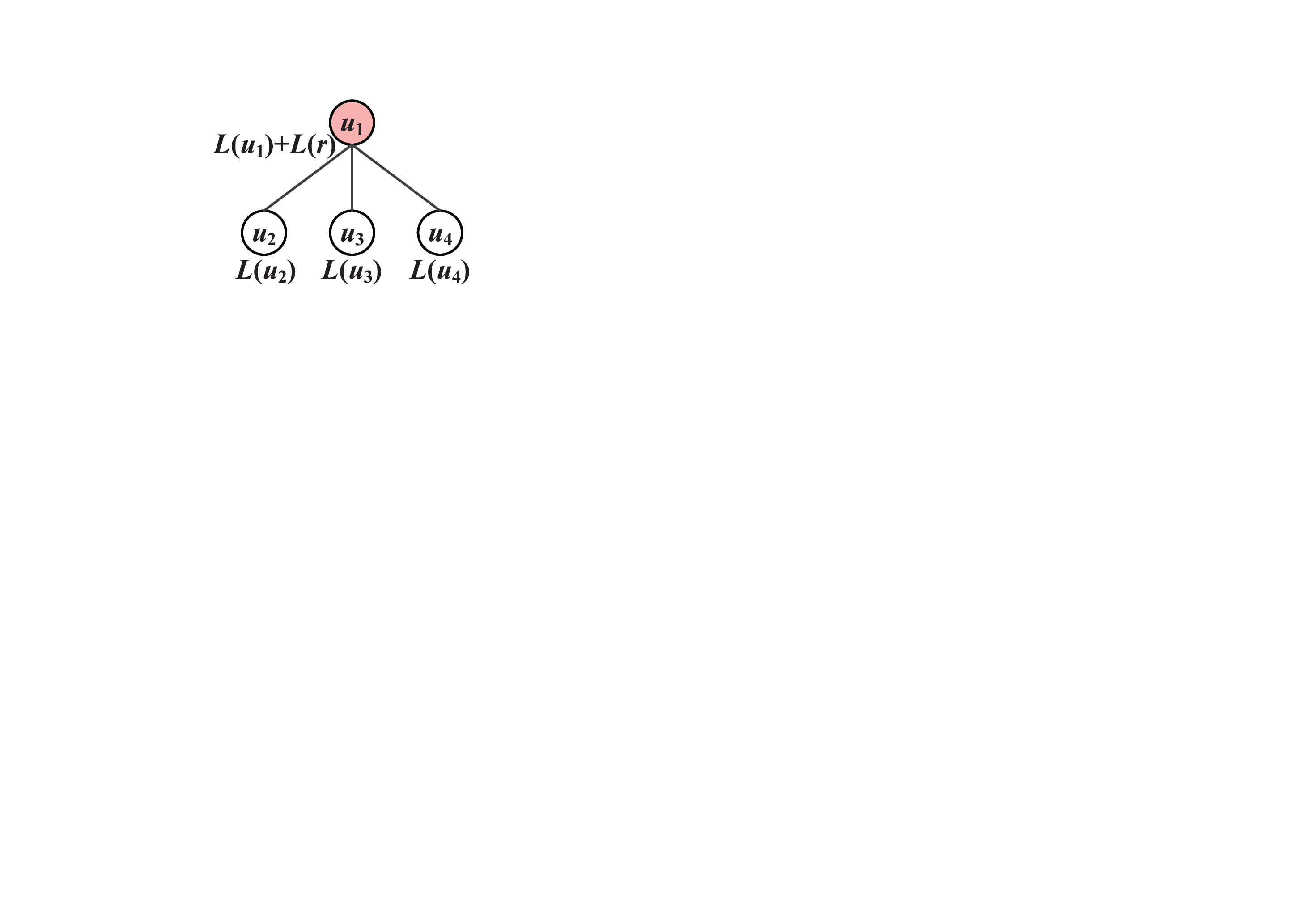}\vspace{-15pt}}
  }
  \caption{Illustration of different rescaling strategies for a 1-Pachira lottree with a crowdsourcer $r$ and four participants $u_1\backsim u_4$, where $u_i$ denotes the $i$-th user based on the time of joining the tree, $L(u_i)$ and $L(r)$ denote the original lottery values according to the original tree structure and lottery function Eq. (\ref{eq-lottery}), and the value beside each node denotes the respective lottery value before or after rescaling.}
  \label{fig_rescaling} 
  \vspace{-10pt}
\end{figure*}

\textbf{Structure-dependent Rescaling:} Given a lottree with lottery value $L(u)$ for each non-root node $u$ and $L(r)$ for the root node $r$, a rescaling strategy is \emph{structure-dependent} if it distributes $r$'s lottery value to part of nodes $\{v_i\}$ who are at the specified levels or locations based on the tree structure with proportion $w_i>0$, $1\leq i \leq l$, so that $v_i$'s lottery value is rescaled as $L'(v_i)=L(v_i)+w_i L(r)$, $\sum_{i=1}^l w_i =1$, $r$'s lottery value is rescaled as $L'(r)=0$, and other nodes' lottery values remain the same.

\textbf{Time-dependent Rescaling:} Given a lottree with lottery value $L(u)$ for each non-root node $u$ and $L(r)$ for the root node $r$, a rescaling strategy is \emph{time-dependent} if it distributes $r$'s lottery value to part of nodes $\{v_i\}$ who are in the specified time orders joining the tree with proportion $w_i>0$, $1\leq i \leq l$, so that $v_i$'s lottery value is rescaled as $L'(v_i)=L(v_i)+w_i L(r)$, $\sum_{i=1}^l w_i =1$, $r$'s lottery value is rescaled as $L'(r)=0$, and other nodes' lottery values remain the same.

Besides, we define the \emph{first-is-root} rescaling strategy as the special case of both \emph{structure-dependent} and \emph{time-dependent} rescaling strategies, as illustrated in Fig. \ref{fig_rescaling5} and Fig. \ref{fig_rescaling6}.

\textbf{First-is-root Rescaling:} Given a lottree with lottery value $L(u)$ for each non-root node $u$ and $L(r)$ for the root node $r$, a \emph{first-is-root} rescaling strategy rescales the lottery value of node $u_1$, who is the first to join the tree, as $L'(u_1)=L(u_1)+L(r)$, $r$'s lottery value as $L'(r)=0$, and leaves other nodes' lottery values unchanged.

We next analyze the above rescaling strategies one by one.
\begin{theorem}
\label{theorem:structure-dependent}
Any structure-dependent rescaling strategy violates USB except the first-is-root rescaling.
\end{theorem}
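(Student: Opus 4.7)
The plan is to show that, for any structure-dependent rescaling $f$ other than first-is-root, one can exhibit a pair of trees $T_1, T_2$ differing only in where a single new node $n$ attaches, and on which $n$'s expected reward under $f$ strictly differs --- violating USB. Because the unrescaled 1-Pachira lottree already satisfies USB, the baseline lottery value $L(n)$ is invariant across the two placements of $n$, so any USB breach introduced by $f$ must come entirely from the redistributed share $w_i L(r)$ that $n$ receives.

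The crux is to reduce to a minimal test configuration. Before $n$ arrives, take the tree to consist of only $r$ and one participant $u_1$, with $C(u_1) = C(n) = 1$, and compare Placement~(A), where $n$ joins as a sibling of $u_1$, giving $T_A = r \to \{u_1, n\}$, with Placement~(B), where $n$ joins as a child of $u_1$, giving the chain $T_B = r \to u_1 \to n$. A short computation with Eq.~(\ref{eq-lottery}) yields $L(r) = 1 - 2\pi(1/2)$ in $T_A$, which is strictly positive by the strict convexity of $\pi$ together with $\pi(0)=0, \pi(1)=1$, and $L(r) = 1 - \pi(1) = 0$ in $T_B$. Next I would invoke structural symmetry in $T_A$: since $u_1$ and $n$ are indistinguishable leaf children of $r$, any rule that depends purely on tree structure must treat them identically; combined with $\sum_i w_i = 1$ and the absence of any nodes at level~$\ge 2$, this forces $f(T_A)$ to assign each of them weight $1/2$. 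Therefore $n$'s share of $L(r)$ is $L(r)/2 > 0$ in Placement~(A), while in Placement~(B) it is $0$ because $L(r)$ itself vanishes. This produces the required USB counterexample. The unique way to block the argument is to break the structural tie between $u_1$ and $n$ in $T_A$ using information outside the tree structure --- most naturally the order of arrival --- and the only rule that consistently routes the entire $L(r)$ away from $n$ in every placement (so that $n$'s share of $L(r)$ is identically zero and USB is preserved) is precisely first-is-root.

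I expect the main obstacle to be justifying the ``structural symmetry forces equal weights'' step rigorously, since the paper's definition of \emph{structure-dependent} is stated informally. I would formalize it by taking two copies of $T_A$ that are isomorphic as unlabeled trees with $u_1$ and $n$ swapped: any rule depending only on tree structure must return identical weight assignments on isomorphic inputs, so the two symmetric leaves of $r$ must receive equal weight. Once this is in hand, the two-case analysis above completes the proof, and the only rescaling that escapes the contradiction is first-is-root, which by definition pins the beneficiary to the pre-existing first-joined node and hence cannot be structurally expressed except by the ordering information that precisely characterizes it.
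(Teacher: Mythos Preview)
Your approach is correct and genuinely different from the paper's. The paper argues at the level of a four-participant tree (Fig.~\ref{fig_rescaling1}): it observes that when the specified positions are, say, the level-2 nodes, a deeper node $u_4$ can bypass its solicitor $u_1$ to become a child of $r$, thereby landing in a rewarded position and picking up an extra $w_4 L_{SB}(r) > 0$ while its base lottery value is unchanged. The paper's proof is thus a direct ``bypass into the bonus zone'' argument, illustrated on one representative rescaling; for first-is-root it simply notes that no node can bypass its way into being the first arrival.

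Your argument instead engineers a minimal two-participant configuration in which the root's lottery value \emph{itself} toggles between positive and zero depending on where $n$ attaches. This is a different lever: rather than showing $n$ can move into a rewarded slot, you make the entire redistributed mass vanish in one placement and then use automorphism-invariance to force $n$ to receive half of it in the other. What this buys you is a uniform argument that does not depend on which levels or locations the rescaling privileges --- the symmetry of $T_A$ collapses every purely structural rule to the same $1/2$--$1/2$ split. The price is exactly the step you flag: you must commit to a formalization of ``structure-dependent'' as ``invariant under tree automorphisms fixing $r$,'' which the paper states only informally; the paper's own proof sidesteps this by working with the intuitive reading of ``specified levels or locations'' and exhibiting the bypass explicitly. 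Your treatment of first-is-root as the unique rule breaking the $u_1$--$n$ tie via arrival order is compatible with, though framed slightly differently from, the paper's classification of first-is-root as a degenerate member of both the structure-dependent and time-dependent families.
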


\begin{proof}
Let us first consider any structure-dependent rescaling strategy except the first-is-root rescaling, where some nodes will tend to bypass their solicitors strategically to become the nodes who are at the specified levels or locations for gaining a higher lottery value. It can be proved by the following example: For a lottree illustrated in Fig. \ref{fig_rescaling1}, if user $u_4$ with lottery value $L(u_4)$ bypasses its solicitor $u_1$ to become the child of $r$, then its lottery value $L'(u_4)$ remains the same in the case without structure-dependent rescaling as illustrated in Fig. \ref{fig_USB1}: $L'(u_4)=W(u_4)=\pi(C(u_4)/C(T))=L(u_4)$, but its lottery value $L''(u_4)$ can be increased after the structure-dependent rescaling as illustrated in Fig. \ref{fig_USB2}: $L''(u_4)=L(u_4)+w_4L_{SB}(r)>L(u_4)$. This means that USB is violated after a structure-dependent rescaling, as $u_4$ can gain a higher lottery value by bypassing its solicitor $u_1$.

Then we consider the first-is-root rescaling strategy.
In this special case, no node can bypass its solicitor to become the first node.
It implies that no node can gain a higher lottery value by bypassing its solicitor.
Thus, the first-is-root rescaling strategy satisfies USB.
\end{proof}
\begin{figure}[!t]
  \centerline{
  \subfigure[Before rescaling]{
    \label{fig_USB1}
    \includegraphics[height=0.9in]{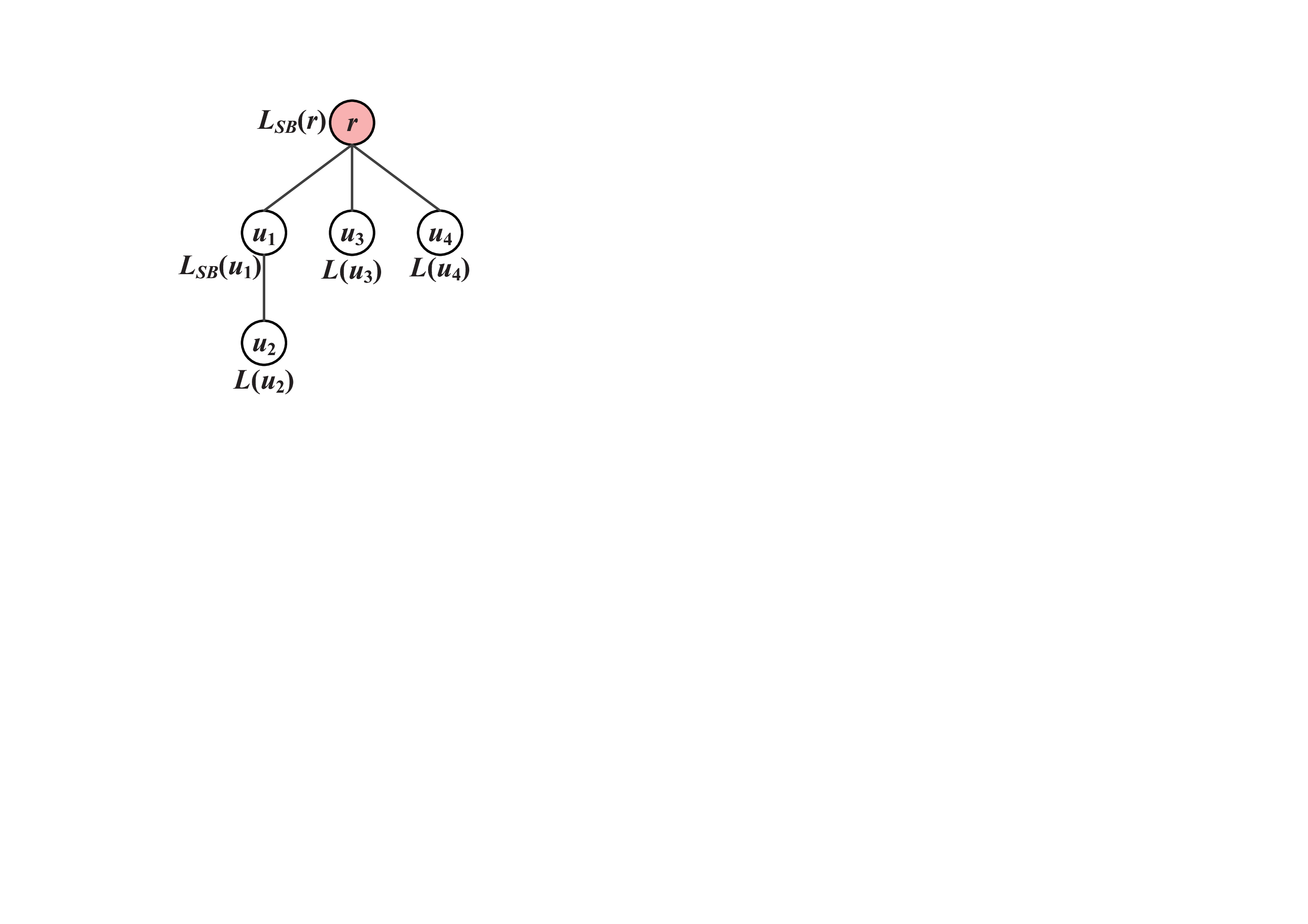}\hspace{15pt}\vspace{-15pt}}
  \subfigure[After structure-dependent rescaling ($w_1+w_3+w_4=1$)]{
    \label{fig_USB2}
    \includegraphics[height=0.9in]{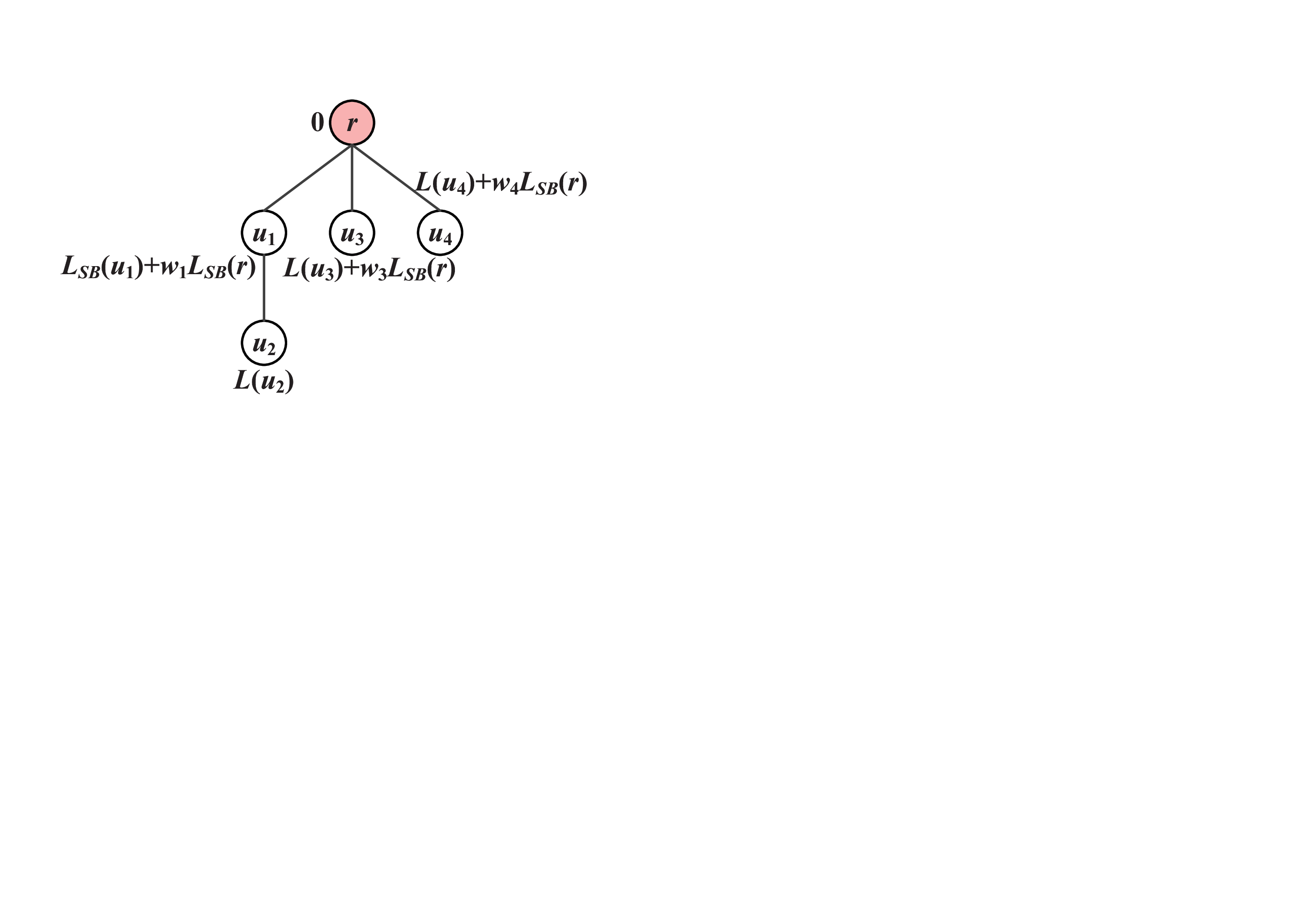}\vspace{-15pt}}
  }
  \caption{Illustration of the impact from the structure-dependent rescaling strategy on \emph{USB} when $u_4$ bypasses its solicitor in Fig. \ref{fig_rescaling1}, where $L_{SB}(*)$ denotes a new lottery value caused by the \emph{solicitor bypassing} behavior.}
  \label{fig_USB} 
  \vspace{-10pt}
\end{figure}

\begin{theorem}
\label{theorem:time-dependent}
Any time-dependent rescaling strategy violates USA except the first-is-root rescaling.
\end{theorem}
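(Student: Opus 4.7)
The plan is to mirror the two-part structure of Theorem \ref{theorem:structure-dependent}: first construct an explicit, profitable Sybil attack against any time-dependent rescaling that is not \emph{first-is-root}, then verify that \emph{first-is-root} inherits USA from the unrescaled \emph{1-Pachira lottree} guarantee of Section \ref{subsec:1-Pachira}.

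For the general case, I would begin by observing that any time-dependent rescaling other than \emph{first-is-root} must bonus at least one time position $i^{*}\geq 2$: if $l\geq 2$, some bonused index necessarily exceeds 1; and if $l=1$, the single bonused position is not 1 (else the strategy would be \emph{first-is-root}). I would then consider a scenario in which a node $u$ is honestly the first joiner, and have $u$ launch a Sybil attack producing $i^{*}$ replicas $u_1,\ldots,u_{i^{*}}$ chained under $u$'s original parent --- with $u_j$ a child of $u_{j-1}$ and all of $u$'s original children reassigned under $u_{i^{*}}$ --- registered so that $u_j$ occupies the $j$-th time position. A direct telescoping expansion using (\ref{eq-weight}) and (\ref{eq-lottery}) gives $\sum_{j=1}^{i^{*}} L(u_j)=L(u)$, and since $C(T'_{u_1})=C(T_u)$ the root's lottery value is unchanged, $L'(r)=L(r)$. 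The replicas collectively occupy the first $i^{*}$ positions and thus capture every bonus at positions $\leq i^{*}$, including $w_{i^{*}}L(r)$; the honest $u$ (at position 1) captures only the position-1 bonus if any, so the net Sybil gain is at least $w_{i^{*}}L(r)>0$. Strict positivity of $L(r)$ follows from the strict convexity of $\pi$ in (\ref{eq-pi}).

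For the \emph{first-is-root} rescaling, the argument is almost immediate. If the attacker $u$ is not the honest first joiner, the bonus is untouched by the attack and the \emph{1-Pachira lottree}'s USA bound $\sum_i L(u_i)\leq L(u)$ already forbids any gain. If $u$ is the honest first joiner, exactly one Sybil replica inherits the full bonus $L(r)$, and the combined value $\sum_i L(u_i)+L(r)$ is bounded above by $L(u)+L(r)$, again by unrescaled USA. Either way $u$ cannot profit, so USA is preserved.

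The main obstacle I anticipate is pinning down the timing model underlying the attack: the definition of a Sybil attack in Section \ref{subsec:desirable properties} specifies only parentage for Sybil replicas, so I would need to state a clean auxiliary assumption that an attacker may register multiple Sybils consecutively starting from its own decision time but cannot preempt positions honestly claimed by unaffiliated users who decided earlier --- precisely the reason the \emph{first-is-root} case is safe. The algebra itself is routine once the telescoping form of (\ref{eq-lottery}) is unwound, and a schematic analogous to Fig. \ref{fig_USB} would make the construction transparent.
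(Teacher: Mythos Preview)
Your proposal is correct and structurally mirrors the paper's argument, but it is strictly more general. The paper proves the first half only by example: it takes the specific rescaling that bonuses positions $1$ and $2$ (with $w_1+w_2=1$) and has $u_1$ split into two chained Sybils, showing via the same telescoping identity that the Sybil pair captures the full $L(r)$ bonus rather than just $w_1L(r)$. Your version instead isolates an arbitrary bonused position $i^{*}\ge 2$ and chains $i^{*}$ Sybils, which is what is actually needed to cover \emph{every} non-first-is-root time-dependent rule; the paper's two-Sybil construction does not literally establish the theorem for, say, a rescaling that bonuses only position $5$. For the \emph{first-is-root} direction the two arguments coincide: both observe that exactly one Sybil can inherit the position-$1$ bonus and then fall back on the unrescaled \emph{1-Pachira} USA bound $\sum_i L(u_i)\le L(u)$. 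Your explicit flagging of the timing-model assumption (that an attacker can register consecutive Sybils but cannot preempt an earlier honest joiner) is also a clarification the paper leaves implicit; it is exactly what makes the first-is-root case immune while all others fail.
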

\begin{figure}[!t]
  \centerline{
  \subfigure[Before rescaling]{
    \label{fig_USA1}
    \includegraphics[height=0.9in]{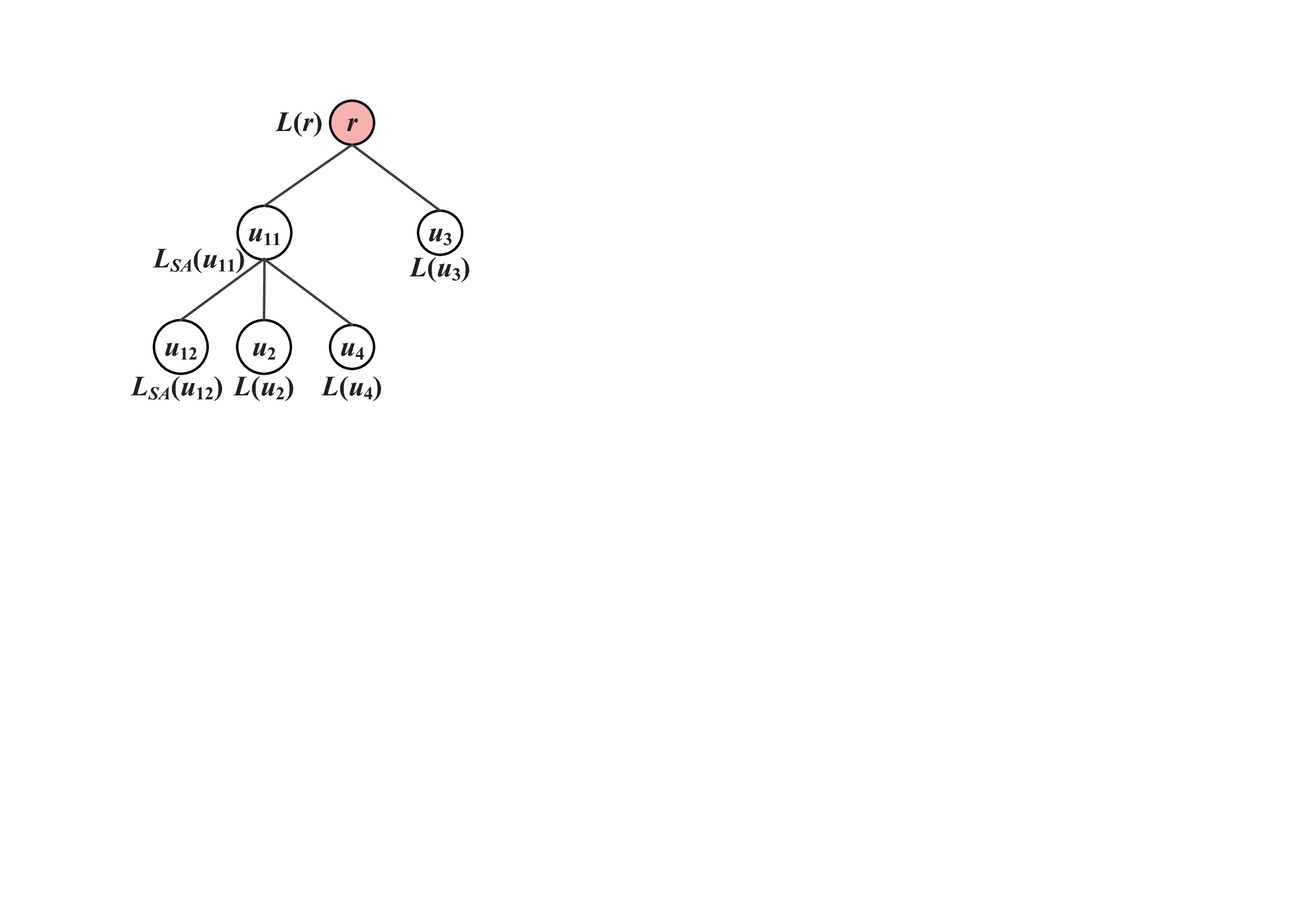}\hspace{15pt}\vspace{-15pt}}
  \subfigure[After time-dependent rescaling ($w_1+w_2=1$)]{
    \label{fig_USA2}
    \includegraphics[height=0.9in]{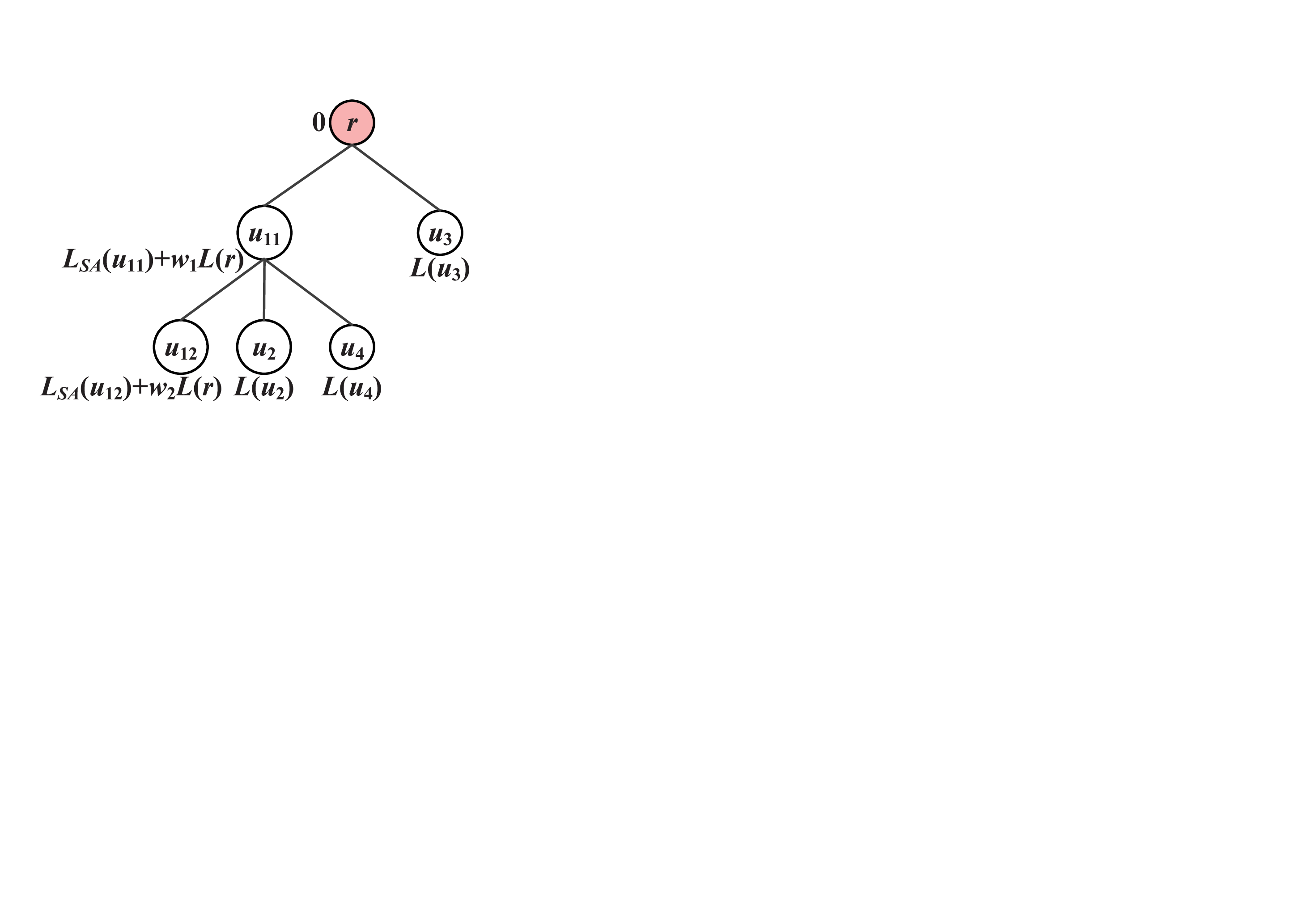}\vspace{-15pt}}
  }
  \caption{Illustration of the impact from the time-dependent rescaling strategy on USA when $u_1$ launches a Sybil attack by splitting itself into two nodes $u_{11}$ and $u_{12}$ in Fig. \ref{fig_rescaling1}, where $L_{SA}(*)$ denotes a new lottery value caused by the Sybil attack.}
  \label{fig_USA} 
  \vspace{-10pt}
\end{figure}
\begin{proof}
Let us first consider any time-dependent rescaling strategy except the first-is-root rescaling, where some nodes will tend to launch a Sybil attack by having multiple identities in the specified time orders joining the tree for gaining a higher lottery value. It can be proved by the following example: For a lottree $T$ illustrated in Fig. \ref{fig_rescaling1}, if user $u_1$ with lottery value $L(u_1)$ splits itself into two nodes $u_{11}$ and $u_{12}$, which are the first two nodes joining the tree, resulting in a new lottree $T'$ illustrated in Fig. \ref{fig_USA1}, then $u_1$ cannot gain lottery value in the case without time-dependent rescaling:
\begin{align}\label{eq-time}
&L_{SA}(u_{11})+L_{SA}(u_{12})\nonumber\\
=&W(T'_{u_{11}})-[W(u_{12})+W(u_2)+W(u_4)]+W(u_{12})\nonumber\\
=&W(T'_{u_{11}})-[W(u_2)+W(u_4)]\nonumber\\
=&\pi\left(\frac{C(u_{11})\!+\!C(u_{12})\!+\!C(u_2)\!+\!C(u_4)}{C(T)}\right)\!-\![W(u_2)\!+\!W(u_4)]\nonumber\\
=&\pi\left(\frac{C(u_1)+C(u_2)+C(u_4)}{C(T)}\right)-[W(u_2)+W(u_4)]\nonumber\\
=&W(T_{u_1})-[W(u_2)+W(u_4)]\nonumber\\
=&L(u_1),
\end{align}
which also implies that the lottery value of $r$, $L(r)$, remains the same.
However, a time-dependent rescaling will result in a new lottree $T''$ illustrated in Fig. \ref{fig_USA2}, where $u_1$'s lottery value $L''(u_1)$ will gain as follows:
\begin{align}\label{eq-time2}
L''(u_1)&=L_{SA}(u_{11})+w_1L(r)+L_{SA}(u_{12})+w_2L(r)\nonumber\\
&=L_{SA}(u_{11})+L_{SA}(u_{12})+L(r)\nonumber\\
&=L(u_1)+L(r)\nonumber\\
&>L(u_1).
\end{align}
This means that USA is violated after a time-dependent rescaling, as $u_1$ can gain a higher lottery value by launching a Sybil attack.

Then we consider the first-is-root rescaling strategy.
For any node in the lottree except the first node, it is impossible to gain a higher lottery value by launching a Sybil attack, as none of its sybil nodes may become the first node to occupy the original lottery value of the root node.
Now we consider the first node $u_1$.
According to the definition of the first-is-root rescaling, if $u_1$ does not launch any Sybil attack, its lottery value will be:
\begin{equation}\label{eq-first-is-root1}
  L'(u_1)=L(u_1)+L(r)=1-\sum_{u_i \in {T\setminus\{r,u_1\}}}{L(u_i)}.
\end{equation}
If $u_1$ launches a Sybil attack by splitting itself into $s$ replicas $u_{11}, u_{12}, \ldots, u_{1s}$ ($s>1$), $\sum_{i=1}^{s}{C(u_{1i})=C(u_1)}$, then the original lottery value of the root node is only distributed to the first Sybil node $u_{11}$ regardless of how $u_1$ organize the structure of Sybil nodes, and its lottery value will be:
\begin{align}\label{eq-first-is-root2}
  \sum_{i=1}^s{L''(u_{1i})}&=L_{SA}(u_{11})+L(r)+\sum_{i=2}^s{L_{SA}(u_{1i})}\nonumber\\
  &=1-\sum_{u_i \in {T\setminus\{r,u_1\}}}{L(u_i)}.
\end{align}
This implies that the first node is also impossible to gain a higher lottery value by launching a Sybil attack.
Thus, the first-is-root rescaling strategy satisfies USA.
\end{proof}

According to the definition of the first-is-root rescaling strategy, it is straightforward to satisfy BC. It is also easy to infer that the first-is-root rescaling strategy satisfies CCI and VPC, as the lottery value gets higher for the first node $u_1$ and remains unchanged for any other node $v\in {T\setminus\{r,u_1\}}$. One may think that in the first-is-root rescaling strategy the first node will have no incentive to solicit new nodes as it will become the root node and all other nodes are its descendants regardless of whether it makes referrals. Considering this, one may think the first-is-root rescaling strategy violates CSI. In fact, however, it can be proved that the first node still has an incentive to solicit new nodes due to the competitive effect.
\begin{lemma}
\label{lemma:CSI}
The first-is-root rescaling strategy satisfies CSI.
\end{lemma}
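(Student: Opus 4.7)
The plan is to split the CSI statement by the identity of the node $u$ being analyzed: either $u$ is not the first participant $u_1$ (the easy case) or $u = u_1$ (the substantive case flagged just before the lemma).

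For any $u \in T \setminus \{r, u_1\}$, the first-is-root rescaling, by definition, leaves $L(u)$ unchanged on whichever post-solicitation tree we evaluate it on. So the CSI inequality $\mathbb{E}[R'(u)] > \mathbb{E}[R''(u)]$ reduces term-for-term to the corresponding inequality for the underlying 1-Pachira lottree, which is already established in \cite{douceur2007lottery}. No further work is needed for this case.

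For $u = u_1$, my strategy is to rewrite the rescaled value in a form that exposes the ``competitive'' structure the paragraph preceding the lemma alludes to. Using $W(T_r)=\pi(1)=1$ and the definition of $L$,
\begin{equation*}
L'(u_1) = L(u_1) + L(r) = 1 - \sum_{v \text{ child of } u_1} W(T_v) - \sum_{v' \text{ child of } r,\, v' \neq u_1} W(T_{v'}),
\end{equation*}
so $u_1$'s rescaled value equals $1$ minus the total weight of the subtrees that ``compete'' with the $\{r,u_1\}$ block: the subtrees rooted at $u_1$'s children together with the subtrees rooted at $r$'s other children. Proving CSI for $u_1$ therefore reduces to showing that this total competing weight grows strictly less in case~1 (where $n$ enters $T_{u_1}$) than in case~2 (where $n$ enters $T \setminus T_{u_1}$). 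I would enumerate the four placements of $n$---(i) new child of $u_1$, (ii) deeper inside an existing child subtree $T_{v_n}$ of $u_1$, (iii) new child of $r$, (iv) deeper inside an existing sibling subtree $T_{q^*}$ of $u_1$---and in each sub-case express the change in total competing weight as a single $\pi$-difference in the normalized contributions $c_{v_n}, c_{q^*}, c_n = C(n)/C(T^+)$. Comparing the case-1 and case-2 $\pi$-differences then reduces to strict convexity of $\pi$ together with the boundary condition $\pi(0)=0$, which in particular yields the superadditivity $\pi(a+b) > \pi(a)+\pi(b)$ for $a,b \in (0,1)$.

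The main obstacle I anticipate is the ``borderline'' pairing in which $p=u_1$ is matched against $q=r$: here the two raw increments both collapse to $\pi(c_n)$, so the strict inequality demanded by CSI cannot come from this pair alone. The strict gap must instead be harvested from the pre-existing competitive structure of the tree (a pre-existing child of $u_1$ or a pre-existing other child of $r$), which is exactly what the paper describes as the ``competitive effect''. Once that gap is isolated and the convexity comparisons in the other three pairings are combined with it, the inequality $L'^{(1)}(u_1) > L'^{(2)}(u_1)$, and hence $\mathbb{E}[R'(u_1)] > \mathbb{E}[R''(u_1)]$, follows.
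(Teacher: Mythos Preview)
Your decomposition is more systematic than the paper's: the paper handles the non-$u_1$ case by the same reduction you use, but for $u=u_1$ it simply exhibits one concrete two-participant example (the comparison $p=u_1$ versus $q=u_2$, where $u_2$ is a sibling of $u_1$ under $r$) and checks the strict inequality via the superadditivity $\pi(a)+\pi(b)<\pi(a+b)$. It does not attempt your four-placement enumeration.

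That said, there is a genuine gap in your resolution of the borderline pairing $p=u_1$, $q=r$. Your identity
\[
L'(u_1)=1-\sum_{v\,\text{child of }u_1}W(T_v)-\sum_{\substack{v'\,\text{child of }r\\ v'\neq u_1}}W(T_{v'})
\]
is correct, but once you apply it to the two post-insertion trees the right-hand sides coincide term for term: in case~1 the new node $n$ contributes a fresh summand $\pi(c_n)$ as a child of $u_1$; in case~2 it contributes the \emph{same} fresh summand $\pi(c_n)$ as a new child of $r$; and because $C(T^+)$ is identical in both cases, every pre-existing subtree weight is also identical. There is therefore no ``pre-existing competitive structure'' from which a strict gap can be harvested---the two rescaled values are exactly equal. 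The ``competitive effect'' the paper invokes is the comparison against a \emph{sibling} $u_2$ of $u_1$, not against $r$; the paper's example never offers $r$ as an attachment point and so never confronts this borderline. Your suggestion to ``combine'' this pairing with the convexity comparisons from the other three pairings cannot rescue the argument either, since CSI is a per-$(p,q)$ requirement: the strict inequality must hold for each individual choice of $p\in T_{u_1}$ and $q\in T\setminus T_{u_1}$, not in aggregate.
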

\begin{proof}
Since the first-is-root rescaling strategy does not change the lottery value of any node except the root node and the first node, it is only required to prove that the first node has an incentive to solicit new nodes. Let us consider an example illustrated in Fig. \ref{fig_example}: $u_1$ and $u_2$ are the first two nodes. According to the first-is-root rescaling strategy, $u_1$ is rescaled as the root node, and $u_2$ is the child of $u_1$, as illustrated in Fig. \ref{fig_example1}. Now assume that there is a new node $u_3$: $C(u_3)>0$, which in case 1 joins the tree in response of $u_1$'s solicitation (Fig. \ref{fig_example2}), and in case 2 joins the tree in response of $u_2$'s solicitation (Fig. \ref{fig_example3}). The lottery values of $u_1$ in the two cases are respectively as follows:
\begin{equation}\label{eq-CSI1}
  L'(u_1)=1-\left[\pi\left(\frac{C(u_2)}{C(T)}\right)+\pi\left(\frac{C(u_3)}{C(T)}\right)\right].
\end{equation}
\begin{equation}\label{eq-CSI1}
  L''(u_1)=1-\pi\left(\frac{C(u_2)+C(u_3)}{C(T)}\right).
\end{equation}
Due to the strict convexity of the function $\pi$, the following inequality holds:
\begin{equation}\label{eq-CSI3}
\pi\left(\frac{C(u_2)}{C(T)}\right)+\pi\left(\frac{C(u_3)}{C(T)}\right)<\pi\left(\frac{C(u_2)+C(u_3)}{C(T)}\right),
\end{equation}
which implies that $L'(u_1) > L''(u_1)$. Thus, CSI is satisfied.
\end{proof}
\begin{figure}[!t]
  \centerline{
  \subfigure[]{
    \label{fig_example1}
    \includegraphics[height=0.7in]{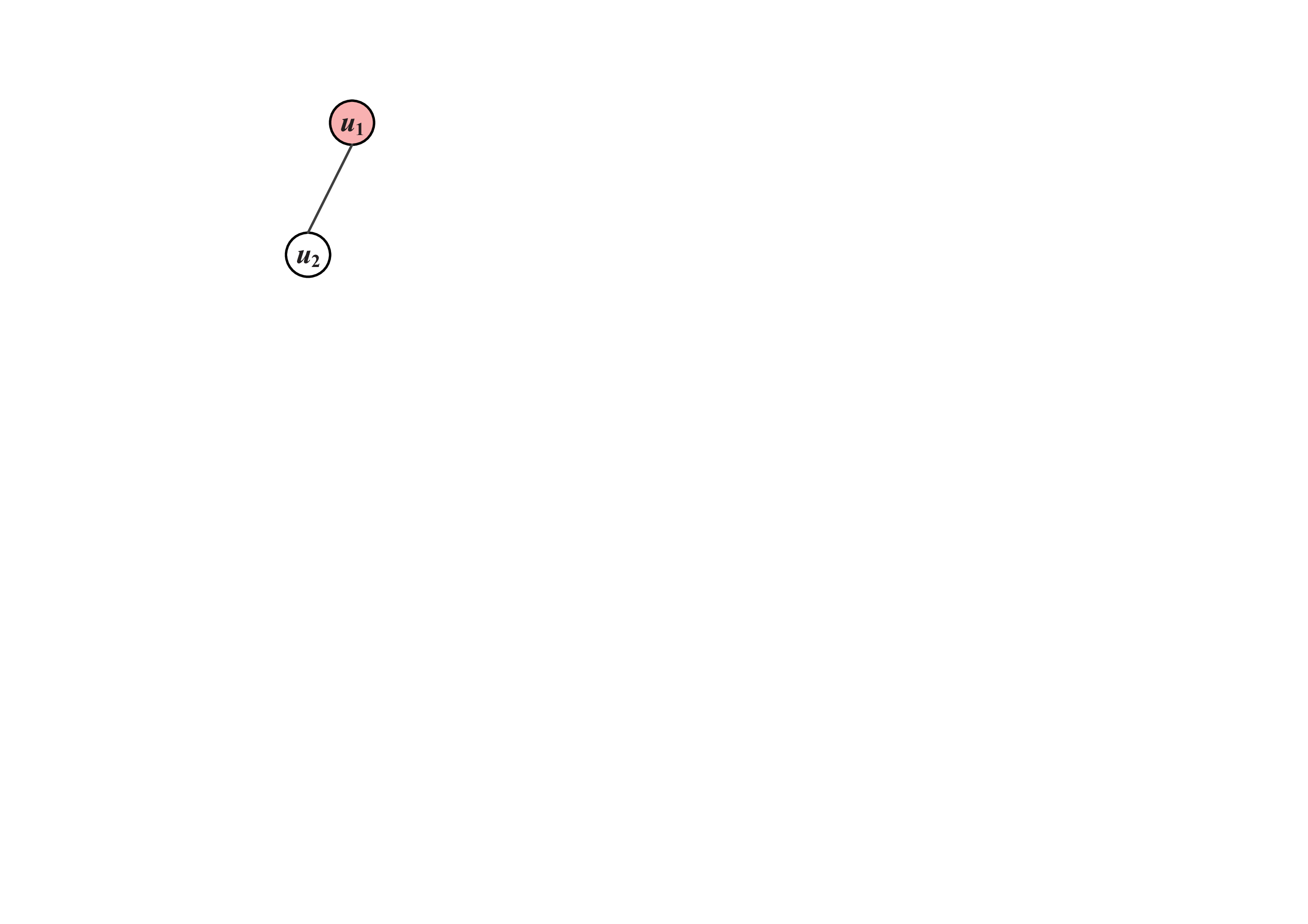}\hspace{20pt}\vspace{-15pt}}
  \subfigure[]{
    \label{fig_example2}
    \includegraphics[height=0.7in]{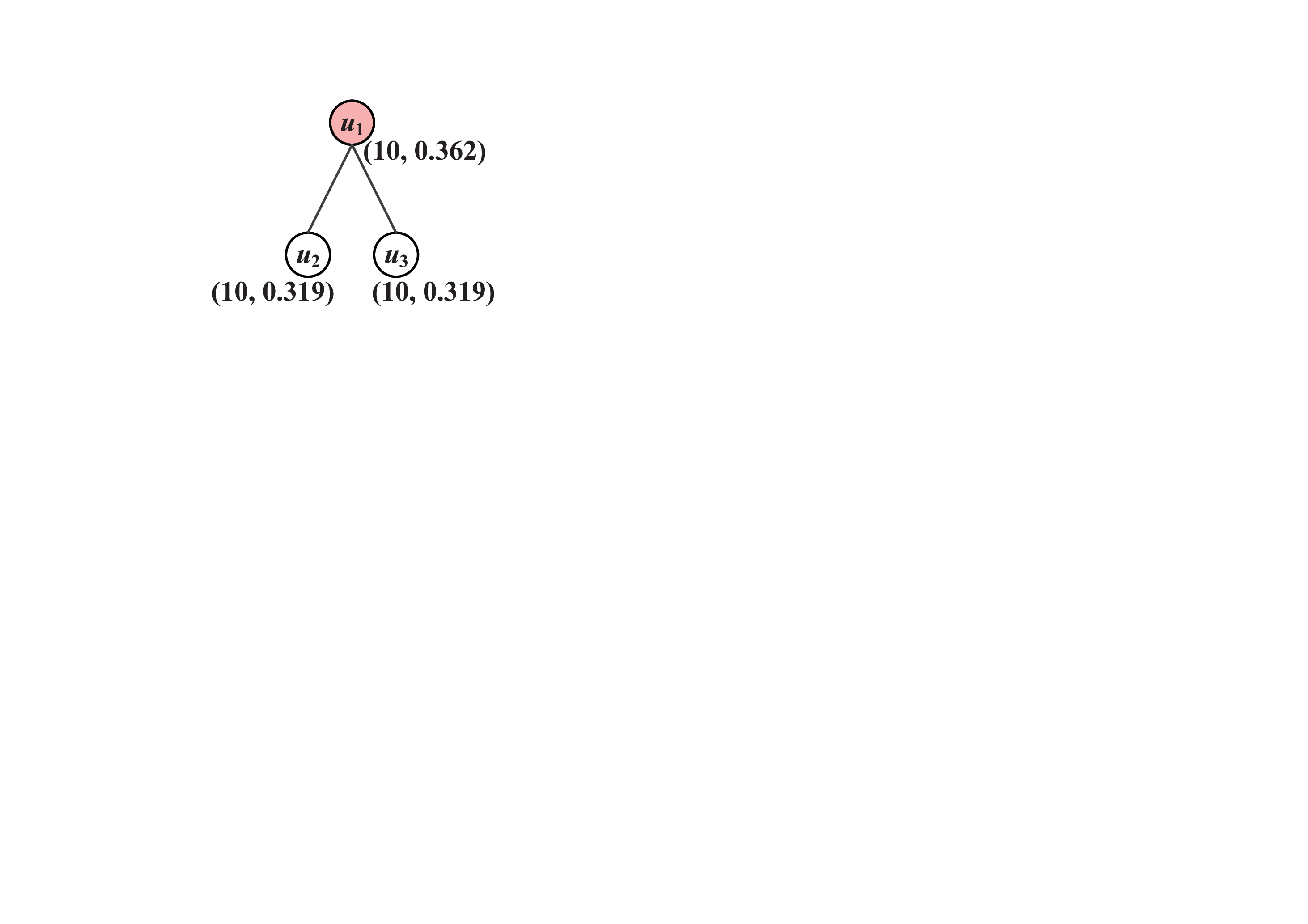}\hspace{20pt}\vspace{-15pt}}
  \subfigure[]{
    \label{fig_example3}
    \includegraphics[height=0.7in]{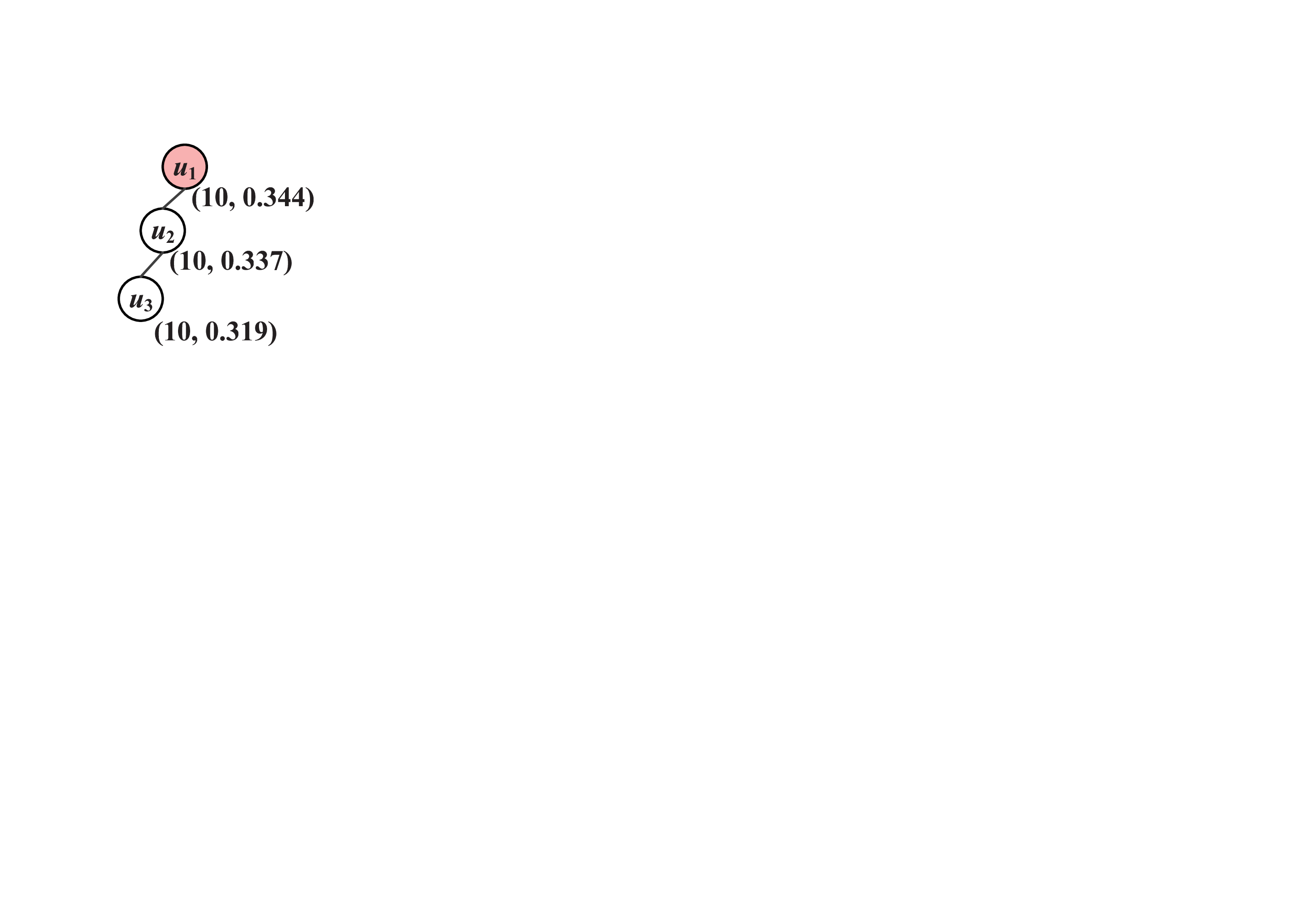}\vspace{-15pt}}
  }
  \caption{Illustration of a lottree using the first-is-root rescaling strategy. (a) denotes the initial lottree; (b) and (c) denote two cases where new node $u_3$ joins the tree, and the 2-tuple beside each node $u_i$ denotes the respective contribution and lottery value: $(C(u_i),L(u_i))$.}
  \label{fig_example} 
  \vspace{-10pt}
\end{figure}

The aforementioned analysis together prove the following theorem.
\begin{theorem}
\label{theorem:first-is-all}
The first-is-root rescaling strategy satisfies all desirable properties, including BC, CCI, CSI, VPC, USB and USA.
\end{theorem}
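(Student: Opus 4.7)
The plan is to bundle together the properties already established and then dispatch the three remaining properties by short direct arguments on the (rescaled) lottery values. Specifically, USB is Theorem~\ref{theorem:structure-dependent}, USA is Theorem~\ref{theorem:time-dependent}, and CSI is Lemma~\ref{lemma:CSI}, so only BC, VPC, and CCI need to be verified.

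BC and VPC are straightforward. For BC, the first-is-root rescaling only redistributes $L(r)$ onto $u_1$ and sets $L'(r)=0$, so the rescaled non-root lottery values still sum to $1$; since a 1-lottree awards the entire budget to a single winner, the total payout is exactly $B$. For VPC, the 1-Pachira lottree is already known to satisfy $\varphi$-VPC~\cite{douceur2007lottery}, and after rescaling we have $L'(u)\geq L(u)$ for every non-root $u$ (with equality for $u\neq u_1$), so $\mathbb{E}[R'(u)]\geq \mathbb{E}[R(u)]\geq \varphi\,C(u)/C(T)$ holds with the same constant $\varphi$.

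For CCI, every non-first non-root node $v$ inherits CCI directly from 1-Pachira because $L'(v)=L(v)$. The delicate case is $u_1$: quoting CCI for $L(u_1)$ alone is not enough, since $L(r)$ can actually decrease when $C(u_1)$ grows. The plan is to rewrite
\begin{equation*}
L'(u_1)=L(u_1)+L(r)=1-\sum_{v\in T\setminus\{r,u_1\}} L(v)
\end{equation*}
and telescope the right-hand sum through $L(v)=W(T_v)-\sum_{(v,c)\in\mathcal{E}(T)}W(T_c)$. Each subtree weight $W(T_c)$ appears once positively (for $c\in T\setminus\{r,u_1\}$) and once negatively (for each node $c$ whose parent lies in $T\setminus\{r,u_1\}$); the only uncancelled terms are indexed by the set $S$ of $u_1$'s children together with the siblings of $u_1$ under $r$, giving $L'(u_1)=1-\sum_{v\in S}W(T_v)$. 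Since no $T_v$ with $v\in S$ contains $u_1$, the contribution $C(T_v)$ stays fixed while $C(T)$ strictly grows, and strict monotonicity of $\pi$ (property~(ii)) forces each $\pi(C(T_v)/C(T))$ to strictly decrease. Hence $L'(u_1)$ strictly increases.

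The main obstacle is precisely this CCI argument for $u_1$: the telescoping identity above is essential, since a direct attempt to lower-bound $L(r)$ (or even to show it is non-decreasing in $C(u_1)$) fails in general, and would otherwise require an unpleasant case analysis on the tree structure.
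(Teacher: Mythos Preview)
Your proposal is correct and follows the same overall structure as the paper: the paper's proof of Theorem~\ref{theorem:first-is-all} is simply the sentence ``The aforementioned analysis together prove the following theorem,'' bundling Theorems~\ref{theorem:structure-dependent}--\ref{theorem:time-dependent} for USB and USA, Lemma~\ref{lemma:CSI} for CSI, and the preceding informal remarks for BC, CCI, and VPC. Your treatment of CCI for the first node $u_1$ is in fact more careful than the paper's; the paper dismisses CCI with the observation that $L'(u_1)\geq L(u_1)$ and $L'(v)=L(v)$ for $v\neq u_1$, which (as you correctly note) does not by itself settle the case of $u_1$, whereas your telescoping identity $L'(u_1)=1-\sum_{v\in S}W(T_v)$ with $S$ the children and siblings of $u_1$ closes that gap cleanly.
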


Besides, the first-is-root rescaling strategy has an additional advantage: users will compete to be the first participant for winning the extra lottery value, which is benefit for recruiting the first batch of users as soon as possible.

\subsection{$K$-Pachira Lottree}
\label{subsec:k-pachira}
One basic problem for extending the \emph{1-Pachira lottree} to $K$-\emph{Pachira} \emph{lottree} is how to select $K$ winners based on users' lottery values. Generally, there are four potential strategies:

\textbf{Strategy \emph{A}:} $K$ different winners are selected in $K$ rounds. After each round, the selected winner is excluded from the candidate set, and the next winner is selected from the rest of nodes based on their respective lottery values.

\textbf{Strategy \emph{B}:} All nodes are sorted according to their decreasing lottery values, and then the top $K$ nodes are selected as winners.

\textbf{Strategy \emph{C}:} $K$ winners are selected in $K$ rounds. In each round, a winner is selected from all nodes based on their respective lottery values, and it is never excluded from the candidate set. It means a node may be selected as winners multiple times.

\textbf{Strategy \emph{D}:} All nodes are allocated virtual lottery tickets proportionately based on their respective lottery values, and then $K$ tickets are drew randomly in one round to determine their owners as winners.

It is interesting to see that both Strategy \emph{A} and \emph{B} violate USB, while both Strategy \emph{C} and \emph{D} maintain all desirable properties of the \emph{1-Pachira lottree} with first-is-root rescaling. We first analyze Strategy \emph{A} by an example illustrated in Fig. \ref{fig_example}: Assume that both two nodes $u_1$ and $u_2$ in Fig. \ref{fig_example1} solicit a new node $u_3$, and then $u_3$ in case 1 joins the tree in response of $u_1$'s solicitation (Fig. \ref{fig_example2}), and in case 2 joins the tree in response of $u_2$'s solicitation (Fig. \ref{fig_example3}). Meanwhile, assume that three nodes have the same contribution, 10, and the \emph{$K$-Pachira lottree} is adopted with $K=2$. We can get different lottery values for case 1 and 2, as illustrated in Fig. \ref{fig_example2} and \ref{fig_example3}. Furthermore, we can compute the probability that $u_3$ becomes one of the two winners:
\begin{equation}\label{eq-example}
P(u_3)=L(u_3)+\frac{L(u_1)L(u_3)}{L(u_1)+L(u_3)}+\frac{L(u_2)L(u_3)}{L(u_2)+L(u_3)},
\end{equation}
which equals to 0.649 and 0.648 for case 1 and case 2, respectively. Obviously, this violates USB.
Specifically, a new node tends to become the child of a solicitor with higher lottery value, so that it has a better chance to win in the next round after the node with higher lottery value is selected and excluded from the candidate set.

Strategy \emph{B} is a competitive strategy in essence. It is not difficult to infer that a new node tends to become the child of a solicitor with lottery value higher than itself so as to maintain its competitive advantage over other nodes with lower lottery values. Thus, Strategy \emph{B} also violates USB.

More generally, in order to satisfy USB, each node's final winning probability should be independent of other nodes' lottery values. Both strategies \emph{C} and \emph{D} follow this principle, and thus satisfy USB. In essence, Strategy \emph{C} and \emph{D} are equivalent to the sampling with replacement and the sampling without replacement in the probability theory, respectively. Specifically, each node $u$ has the same winning probability $L(u)$ in each round for Strategy \emph{C}, and the same node has a slightly higher probability of winning at least once for Strategy \emph{D}. It is also not difficult to see that both Strategy \emph{C} and \emph{D} maintain other desirable properties.

After determining $K$ winners, another basic problem for extending the \emph{1-Pachira lottree} to \emph{$K$-Pachira lottree} is how to allocate rewards to these winners. It should follow a similar principle, namely that each node's final reward should be independent of other nodes' lottery values. It is a good choice to allocate the total reward $B$ equally to $K$ winners, where each node has the same expected reward as that by using the \emph{1-Pachira lottree} mechanism. In the rest of paper, when referring to the \emph{$K$-Pachira lottree}, we use Strategy \emph{C} together with the reward equipartition strategy for convenience.

\subsection{Sharing-Pachira Lottree}
\label{subsec:sharing-pachira}
In essence, the \emph{Sharing-Pachira lottree} is equivalent to one extreme case of the \emph{$K$-Pachira lottree} with infinite lottery drawings. Under this case, all nodes will proportionally share the budget based on their respective lottery values. In other words, each node $u$ will share a reward:
\begin{equation}\label{eq-sharing}
R(u)=B*L(u).
\end{equation}

It is easy to know that the \emph{Sharing-Pachira lottree} maintains all desirable properties as each node's reward is independent of other nodes' lottery values.

\subsection{CPT-based Mechanism Selection}
\label{subsec:mechanism selection}
\begin{figure}[!t]
  \centering{
  \subfigure[$B=100$]{
    \label{fig_CPT1}
    \includegraphics[width=3.0in]{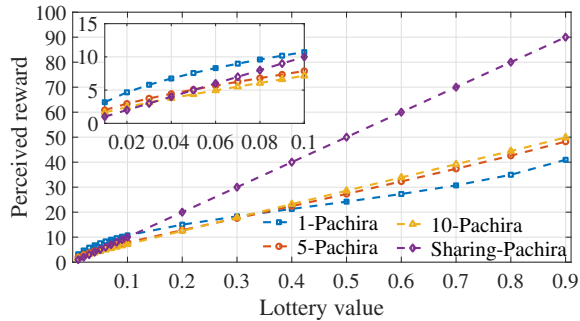}\vspace{-15pt}}
  \subfigure[$B=1000$]{
    \label{fig_CPT2}
    \includegraphics[width=3.0in]{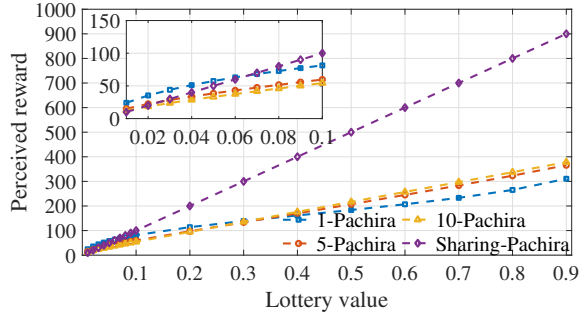}\vspace{-15pt}}
  }
  \caption{A user's perceived reward with various lottery values.}
  \label{fig_CPT}
  \vspace{-10pt}
\end{figure}

Since each user has an uncertain reward before the end of a crowdsourcing campaign for both \emph{1-Pachira} and \emph{$K$-Pachira lottrees}, it is important how a user perceives the payout, which decides whether the user is willing to make contributions and solicitations.
As introduced in Sec. \ref{subsec:CPT}, we can leverage CPT to analyze how users perceive the payout for different reward mechanisms. For the \emph{1-Pachira lottree}, the perceived reward for each user $u$ can be computed according to Eqs. (\ref{eq-value})-(\ref{eq-cpv}) by using the total reward $B$ as the gain $x$ and the lottery value $L(u)$ as the probability $\mathfrak{p}$. For the \emph{$K$-Pachira lottree}, each user $u$ has $K$ possible outcomes with gain-probability pairs:
\begin{equation}\label{gain-probability}
\left(\frac{B\cdot i}{K}, C_K^i (L(u))^i (1-L(u))^{K-i}\right), 1\leq i \leq K,
\end{equation}
and then the perceived reward for user $u$ can be computed according to Eqs. (\ref{eq-cumulative-weight})-(\ref{eq-cpv-2}). For the \emph{Sharing-Pachira lottree}, each user $u$ has a certain reward as shown in Eq. (\ref{eq-sharing}). We compare a user's perceived reward with various lottery values and two different budgets ($B=100, 1000$) for 4 mechanisms (\emph{1-Pachira}, \emph{5-Pachira}, \emph{10-Pachira}, and \emph{Sharing-Pachira}), as shown in Fig. \ref{fig_CPT}. Three interesting phenomenons can be observed:

i) When the budget stays the same, a user with a lower lottery value perceives the largest payout by the \emph{1-Pachira lottree} among all mechanisms, and a user with a higher lottery value perceives the largest payout by the \emph{Sharing-Pachira lottree}, whereas the \emph{$K$-Pachira lottree} always stays the middle level regardless of the lottery value.

ii) When the budget stays the same, a distinct critical lottery value exists, below which one may prefer the \emph{1-Pachira lottree} to the \emph{Sharing-Pachira lottree}, and above which one may prefer the \emph{Sharing-Pachira lottree} to the \emph{1-Pachira lottree}.

iii) As the budget increases, the critical lottery value will become larger.

In essence, the above observations are consistent with the CPT that one tends to risk seeking for higher gains of low probability, and risk aversion for lower gains of high probability. This provides us an interesting and important theoretical guidance to the mechanism selection for satisfying different application requirements as follows.

\textbf{Guidance to Mechanism Selection:} \emph{If a crowdsourcer has a large budget constraint, or it only requires a small number of participants, then the Sharing-Pachira lottree mechanism should be recommended, otherwise the 1-Pachira mechanism should be recommended.}


\section{Performance Evaluation By Simulations}
\label{sec:simulations}
To evaluate the performance of different lottree mechanisms under various scenarios, we build a simulator and conduct extensive simulations. Moreover, the impacts of the budget constraint and the number of required participants are investigated. In this section, we present the simulation framework, parameter settings, and simulation results.

\subsection{Simulation Framework and Parameter Settings}
\label{subsec:simulation framework}
We build a simulator based on the following four steps:

\textbf{Step 1):} The crowdsourcer pushes the crowdsourcing campaign information (i.e., send solicitations) to an initial set of users.

\textbf{Step 2):} Each solicited user decides whether to participate in the campaign: He first decides whether to consider a possible participation according to a \emph{participating interest factor}. If he does consider it and supposes to make a specific contribution following a \emph{contribution model}, he then evaluates the perceived reward according to a \emph{payout valuation model}. Finally, he decides to participate if his perceived reward outweighs the cost of participation following a \emph{cost model}.

\textbf{Step 3):} Each participant decides whether to solicit other users: He first predicts how many users from his acquaintances would accept his solicitations based on a \emph{solicitation prediction model}, and then computes the perceived gain from soliciting according to a \emph{payout valuation model}. Finally, he decides to send solicitations if his perceived gain outweighs the cost of sending solicitations following a \emph{cost model}. Each user's acquaintances are determined based on a \emph{social network model}.

\textbf{Step 4):} Repeat Steps 2) and 3) until the number of participants reaches the crowdsourcer's requirement or the campaign deadline arrives.

The aforementioned simulation framework is similar to \cite{douceur2007lottery}, which involves in a set of theories and models that have been widely accepted in literature. We briefly describe these models and some parameter settings as follows.

\textbf{Social Network Model:} An evolving network model \cite{toivonen2006model} is used to model the acquaintanceship of users, which exhibits several recognized properties of a social network, such as short average path length, broad degree distribution, high clustering, and community structure.
Three basic parameters for the model, $N_0$, $m_r$, and $m_s$, are set the same as specified by Toivonen et al. \cite{toivonen2006model}.

\textbf{Participating Interest Factor:} Each solicited user has two behavioral intentions: showing absolutely no interest or having an interest to consider whether to participate in. We assume each user has a \emph{participating interest factor}, $PIF$, to express his likelihood of two behavioral intentions.

\textbf{Contribution Model:}
As described in Section \ref{sec:introduction}, a more general model, \emph{heterogeneous user model}, is considered.  Specifically, each user $u$'s contribution, $C(u)$, is assumed to follow a random uniform distribution.

\textbf{Payout Valuation Model:} As described before, we leverage CPT to compute the perceived reward for different mechanisms. For Step 2), we first compute the lottery value based on the current tree structure, and then derive the perceived reward as described in Section \ref{subsec:mechanism selection}. For Step 3), we compute the perceived reward from soliciting new participants, and derive the difference between it and the original reward without sending solicitations as the perceived gain. The key parameters $\beta$ and $\delta$ are set the same as \cite{douceur2007lottery} for computing lottery values, and $\alpha$ and $\gamma$ are set the same as \cite{tversky1992advances} for leveraging CPT.

\textbf{Solicitation Prediction Model:} Each user $u$ assumes that all his $\zeta_u$ neighbors have not joined in the campaign, and each of his neighbors will join in the campaign with the probability $PIF$ if he sending solicitations. Thus, user $u$ will predict the number of users who accept his solicitations as $\zeta_u*PIF$.

\textbf{Cost Model:} Each user $u$ has a cost of participation, $CP(u)$, and another cost of sending solicitations, $CS(u)$, to represent his expected rewards, which are assumed to follow two different random uniform distributions.

The above models involve in many parameters as listed in Table \ref{table-parameter-settings}. Moreover, in order to evaluate the impacts of the budget constraint ($B$) and the number of required participants ($N$), we vary the values of $N$ from 5 to 50 with the increment of 1, and set two different values of $B$ as 1000 and 5000. For each setting, simulations are repeated 100 times and the respective average results are obtained, so as to reduce variance.
\begin{table}[htbp]\setlength{\tabcolsep}{5pt}
\begin{center}
\caption{Parameter Settings}
\label{table-parameter-settings}
\begin{tabular}{c|c}
  \hline
  Model & Parameter and Value\\
  \hline
  \hline
     & $N_0=30$,\\
   Social Network Model & $Pr(m_r\!=\!1)\!=\!0.95$, $Pr(m_r\!=\!0)\!=\!0.05$,\\
    & $m_s \sim U[1,3]$\\
   \hline
    Participating Interest Factor & $PIF = 0.5$\\
   \hline
   Contribution Model & $C(u)\sim U[1, 500]$\\
   \hline
    & $\beta=0.5$, $\delta=0.08$,\\
   \raisebox{0.9ex}[0pt]{Payout Valuation Model} & $\alpha=0.88$, $\gamma=0.61$\\
   \hline
    & $CP(u)\sim U[1,30]$\\
   \raisebox{0.9ex}[0pt]{Cost Model} & $CS(u)\sim U[1,15]$\\
  \hline
\end{tabular}
\end{center}
\end{table}

\subsection{Simulation Results}
\label{subsec:simulation results}
Fig. \ref{fig_simulation-results} shows the relationship between the number of required solicitations and the number of required participants for three lottree mechanisms, \emph{1-Pachira}, \emph{10-Pachira}, and \emph{Sharing-Pachira}, under different budget constraints. If we set the same budget constraint and the same number of required number of participants, then the less solicitations a mechanism requires, the easier it is to achieve the requirement of the crowdsourcing campaign. In other words, we use the number of required solicitations as a key metric for mechanism selection. First, when the budget stays the same, we can observe two common and interesting phenomenons independently from Fig. \ref{fig_reward1000random} and Fig. \ref{fig_reward5000random}:

i) The number of required solicitations increase with the number of required participants. 
Meanwhile, there is a larger and larger increasing rate of the number of required solicitations for \emph{Sharing-Pachira}, whereas \emph{1-Pachira} and \emph{10-Pachira} have a relatively lower increasing rate. It means that: as the number of required participants increases, \emph{Sharing-Pachira} will be harder and harder to achieve the requirement of the crowdsourcing campaign, whereas at this time, \emph{1-Pachira} and \emph{10-Pachira} could be the better choice.

ii) When a small number of participants is required, the number of required solicitations for the three lottree mechanisms presents the following relationships: \emph{Sharing-Pachira} $<$ \emph{10-Pachira} $<$ \emph{1-Pachira}, meaning that \emph{Sharing-Pachira} is the beast choice; when a large number of participants is required, it presents an opposite relationships: \emph{1-Pachira} $<$ \emph{10-Pachira} $<$ \emph{Sharing-Pachira}, meaning that \emph{1-Pachira} is the best choice; whereas \emph{10-Pachira} is almost always not the best choice. Generally, there is a distinct critical value of the required number of participants, below which one may prefer \emph{1-Pachira} to \emph{Sharing-Pachira}, and above which one may prefer \emph{Sharing-Pachira} to \emph{1-Pachira}. Specifically, this critical value is 10 (16) when $B=1000$ (5000).

Second, we can observe another interesting phenomenon by combining Fig. \ref{fig_reward1000random} and Fig. \ref{fig_reward5000random}:

iii) As the budget increases, the critical value of the required number of participants will become larger.

\textbf{Summary:} In essence, the above results are consistent with the CPT and the analysis in Sec. \ref{subsec:mechanism selection}, which also validate our important theoretical guidance to the mechanism selection in Sec. \ref{subsec:mechanism selection}.

\begin{figure}[!t]
  \centering{
  \subfigure[$B=1000$]{
    \label{fig_reward1000random}
    \includegraphics[width=3.0in]{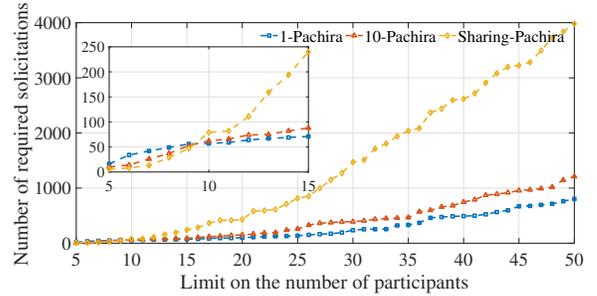}\vspace{-15pt}}
    \subfigure[$B=5000$]{
    \label{fig_reward5000random}
    \includegraphics[width=3.0in]{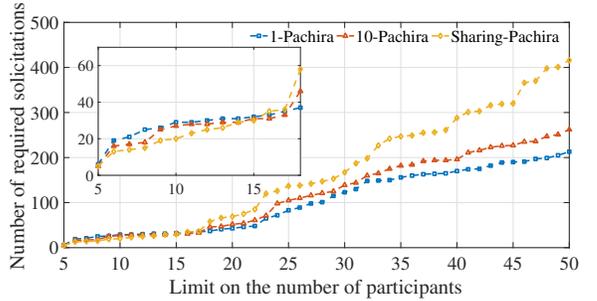}\vspace{-15pt}}
  }
  \caption{Relationship between the number of required solicitations and the number of required participants.}
  \label{fig_simulation-results}
  \vspace{-10pt}
\end{figure} 

\section{Looking For Lost Objects: An Application Case and Its Performance Evaluation}
\label{sec:experiments}
In this section, we first investigate an application case: \emph{looking for lost objects}. Then we design an experimental mobile game, \emph{Treasure Hunt}, and present a series of experiments and several metrics for evaluating the performance of lottree mechanisms. Finally, we provide the experimental results.

\subsection{Looking For Lost Objects: An Application Case}
\label{subsec:application case}
As elaborated in Section \ref{sec:introduction}, an incentive tree mechanism could be used in many crowdsourcing and mobile crowd sensing applications. Next we mainly consider a very useful application case: looking for lost objects, such as a lost child, pet, smartphone, key, and wallet. Imagine a child attaches a Bluetooth Low Energy (BLE) peripheral, e.g., Chipolo \cite{website:chipolo}, in his clothes or shoes. The low power consumption and miniaturization of BLE peripherals make it perfect for tracking the child continuously. If the child is lost, many smartphone users can be recruited to cooperatively look for him by continuous Bluetooth scanning and even locate him \cite{liu2014finding}. There is no doubt that incentive is a key to the success of this application.

\subsection{Treasure Hunt: An Experimental Mobile Game}
\label{subsec:treasure hunt}
In order to evaluate the performance of lottree mechanisms, we design an interesting experimental mobile game, \emph{Treasure Hunt}, which could be used directly for looking for lost objects due to the same intrinsic mechanism. The game involves in three roles:
\begin{itemize*}
\item
 a \emph{crowdsourcer} residing in the cloud, who is responsible for publicizing \emph{treasure hunt tasks}, monitoring users' participation process, and allocating rewards,
\item
 a set of \emph{users}, who register in our mobile APP to play games by using a Bluetooth-enabled smartphone, and
\item
 a \emph{treasure}, which is in fact a volunteer moving freely with a Bluetooth-enabled smartphone.
\end{itemize*}

In essence, \emph{Treasure Hunt} is to find the so-called \emph{treasure} by discovering its Bluetooth when a user gets close to it. A \emph{treasure hunt task} could be characterized by the reward budget, the number of required participants, the treasure ID (i.e., its Bluetooth ID), the task deadline, and the incentive type. Next we will introduce the operation procedure of the game, the contribution function design, and the incentive mechanism design, respectively.

\textbf{Operation Procedure of \emph{Treasure Hunt}:} It consists of six main points as follows.

i) The crowdsourcer publicizes a \emph{treasure hunt task}, and pushes the related information to all registered users. Note that only a part of users who run our mobile APP in the background and maintain a Internet connection can get the task information timely.

ii) Each user who received the task information decides whether to participate in. If yes, he turns on his Bluetooth, optionally opens his GPS, and periodically reports the participation information (time duration of Bluetooth scanning, GPS points) to the crowdsourcer.

iii) Each participating user decides whether to send solicitations to other users through a social network. The solicitation structure is recorded by the crowdsourcer.

iv) If some user discovers the treasure, then he reports the result to the crowdsourcer.

v) When the number of participants achieves the requirement, then no user can participate in any longer.

vi) After the deadline arrives, the crowdsourcer builds an incentive tree according to the participants' contributions and solicitation relationships, and then allocate rewards according to the announced incentive mechanism.

\textbf{Contribution Function Design of \emph{Treasure Hunt}:}
Intuitively, we hope that each user $u$ has a long time duration for Bluetooth scanning and a long travelling distance so as to more easily find the treasure. Thus, we design a contribution function, $C(u)$, by comprehensively considering three factors: time duration for Bluetooth scanning, $Dur(u)$, traveling distance, $Dis(u)$, and whether the user find the treasure, $Find(u)$ (a boolean function), namely that,
\begin{equation}\label{eq-contribution-design}
C(u)=0.5*Dur(u)+0.5*0.1*Dis(u)+120*Find(u).
\end{equation}
Here, $Dur(u)$ is measured in minutes, $Dis(u)$ is measured in meters, the number 0.5 is a weight factor, 0.1 is set because of that a traveling distance of 1 m takes about 0.1 min on average in our experiments, and 120 means that an extra contribution during 120 mins (the duration of a task) will be given to the user who finds the treasure.

\textbf{Incentive Mechanism Design of \emph{Treasure Hunt}:}
One of the most important objectives of \emph{Treasure Hunt} is to compare the three lottree mechanisms: 1-Pachira, $K$-Pachira, and Sharing-Pachira, by realistic experiments. However, it seems hard for users to understand the details of these mechanisms if we describe them straightforwardly. In fact, it is completely unnecessary for users to know about such complicated design. Instead, we only need to tell users a simple rule:

\emph{``Each user will earn a value, and will be rewarded based his value."}

To make it more intuitionistic for users to understand how to evaluate their values, we present the following descriptions to them:

\emph{``How to get a higher value: the longer duration you turn on your Bluetooth, the longer distance you travel (based on your GPS trajectory), the more friends you recommend to, then the higher value you get. Besides, the first participant and the participant who find the treasure will be given an extra value."}

Moreover, we show users intuitionistic descriptions on \emph{1-Pachira}, \emph{$K$-Pachira}, and \emph{Sharing-Pachira lottree} mechanisms, respectively:

\emph{``\textbf{Mechanism A:} Only one participant can get all the reward ($B$). Of course, the higher value, the more likely you win."}

\emph{``\textbf{Mechanism B:} We will have lottery drawings $K$ times, and each winner will get one $K$-th of the total reward ($B$). The higher value, the more likely you win."}

\emph{``\textbf{Mechanism C:} Every participant can get a reward. The higher value, the higher reward. But the total budget is $B$."}

\subsection{Experimental Settings and Evaluation Metrics}
\label{subsec:experimental settings}
We conduct \emph{Treasure Hunt} experiments in a university campus. For the convenience of comparing the three kinds of incentive mechanisms, we need to have a set of registered users as a basis. Thus, we first post an advertisement on our university BBS and some social groups in social networks (\emph{QQ} and \emph{WeChat}) seeking people to register in our mobile APP two days before the experiments officially start. In our advertisement, we tell users the game rules, and announce a budget of 500 RMB to recruit users, who will share the reward equally without limitation on the number of participants. Finally, 62 users registered in our APP before the experiments officially start. After that, we publicize 12 \emph{Treasure Hunt} tasks in 9 days. Each task begins at a random time and lasts for 2 hours. In order to investigate the impact of the number of required participants ($N$), we set two values of $N$ as 10 (tasks 1-3) and 50 (tasks 4, 6, 8) while fixing the budget constraint as $B=100$ RMB. Meanwhile, in order to investigate the impact of the budget constraint ($B$), we set three values of $B$ as 50 RMB (tasks 5, 7, 9), 100 RMB (tasks 4, 6, 8), and 500 RMB (tasks 10-12) while fixing the value of $N$ as $N=50$. The detailed settings are shown in Table \ref{table-experimental-settings}.

\begin{table}[htbp]\setlength{\tabcolsep}{5pt}
\begin{center}
\caption{Experimental Settings}
\label{table-experimental-settings}
\begin{tabular}{c|c|c|c|c}
  \hline
  Task & Release & Budget & Limit on \# of & Incentive\\
   No. & Date & ($B$) & participants ($N$) & Mechanism\\
  \hline
  \hline
   1 & Jan. 5, 2018 & 100 RMB & 10 & \emph{1-Pachira}\\
   2 & Jan. 6, 2018 & 100 RMB & 10 & \emph{Sharing-Pachira}\\
   3 & Jan. 7, 2018 & 100 RMB & 10 & \emph{5-Pachira}\\
   4 & Jan. 8, 2018 & 100 RMB & 50 & \emph{1-Pachira}\\
   5 & Jan. 8, 2018 & 50 RMB  & 50 & \emph{1-Pachira}\\
   6 & Jan. 9, 2018 & 100 RMB & 50 & \emph{Sharing-Pachira}\\
   7 & Jan. 9, 2018 & 50 RMB  & 50 & \emph{Sharing-Pachira}\\
   8 & Jan. 10, 2018 & 100 RMB & 50 & \emph{5-Pachira}\\
   9 & Jan. 10, 2018 & 50 RMB  & 50 & \emph{5-Pachira}\\
   10 & Jan. 11, 2018 & 500 RMB & 50 & \emph{1-Pachira}\\
   11 & Jan. 12, 2018 & 500 RMB & 50 & \emph{Sharing-Pachira}\\
   12 & Jan. 13, 2018 & 500 RMB & 50 & \emph{5-Pachira}\\
  \hline
\end{tabular}
\end{center}
\end{table}

Generally, three performance metrics should be concerned: \emph{total number of participants}, \emph{total contribution of participants}, and \emph{average contribution of participants}. However, two practical factors need to be considered. First, it is a common phenomenon that users' participation enthusiasm declines over time, which has been described in some literature \cite{lee2010sell,gao2015providing}, and verified through a long-term experiment \cite{ji2017exploring}. It means that it is not fair to directly compare the \emph{total number of participants}, as our experiments span a long time. In order to reduce the effect of this factor, we consider another metric, \emph{total number of active users}, meaning the number of users who have ever opened the APP in a certain day. Note that, the reason that an active user opened the APP may be his interest in the APP itself or in the certain task. Whereas a participator must be interested in the certain task. Thus, we use a metric called \emph{Relative Participation Ratio (RPR)} to represent the actual attractiveness of a task, defined as follows:
\begin{equation}\label{eq-participation-ratio}
\texttt{RPR} = \frac{\texttt{\# of participants}}{\texttt{\# of active users} - \texttt{\# of participants}},
\end{equation}
where the denominator could be used to indicate the actual activeness of users that is independent of a certain task.

Second, there is a strong randomness on whether a user can find the treasure. In order to reduce the effect of this factor on evaluating different incentive mechanisms, we revise the contribution function in Eq. (\ref{eq-contribution-design}) as follows:
\begin{equation}\label{eq-contribution-design2}
C(u)=0.5*Dur(u)+0.5*0.1*Dis(u),
\end{equation}
which is used for computing the \emph{total contribution of participants (TCP)} and \emph{average contribution of participants (ACP)}.

\subsection{Experimental Results}
\label{subsec:experimental results}
In our experiments, 20 new users registered in our APP, resulting in 82 registered users in total by adding 62 initial users. However, there are always some inactive users each day. First, we verify the phenomenon that users' participation enthusiasm declines over time. Fig. \ref{fig_activeUsers} shows the changes in the number of active users in 9 days. Generally, there is a significant decline in the number of active users over time. Although the number of active users shows a transient increase on Jan. 8 and Jan. 11, one big reason is the increase of the budget ($B$) or the limit on the number of participants ($N$). Moreover, the number of active users shows a significant decreasing trend over time for the same settings of $B$ and $N$ (by comparing Jan. 5-7, Jan. 8-10, and Jan. 11-13, respectively). This justifies the usage of the metric \emph{RPR} as explained earlier.

Next, we analyze the experimental results on the three metrics introduced earlier: RPR, TCP, and ACP. Moreover, the impacts of the budget constraint ($B$) and the number of required participants ($N$) are investigated. Note that, when we set $N=10$, the number of participants achieves the limitation for all of the three incentive mechanisms. Thus, it is unnecessary to consider the RPR and ACP for the experiments in Jan. 5-7.

\textbf{Relative Participation Ratio (RPR):} Fig. \ref{fig_RPR} shows the RPR under different budget constraints when we fix $N=50$. When $B=50$ RMB, the 1-Pachira lottree has the highest RPR; when $B=100$ RMB, three mechanisms' RPRs are very close; when $B=500$ RMB, the Sharing-Pachira lottree has the highest RPR.

\textbf{Total Contribution of Participants (TCP):} Fig. \ref{fig_con1} plots the TCP under different values of $N$ when we fix $B=100$, from which we observe that the Sharing-Pachira lottree has the highest TCP when a small number of participants is required, while the 1-Pachira lottree has the highest TCP when a large number of participants is required. Fig. \ref{fig_con2} plots the TCP under different values of $B$ when we fix $N=50$, from which we observe that the 1-Pachira lottree has the highest TCP when there is a small budget constraint (50 RMB and 100 RMB), while the Sharing-Pachira lottree has the highest TCP when there is a large budget constraint (500 RMB).

\textbf{Average Contribution of Participants (ACP):} Fig. \ref{fig_avgCon} plots the ACP under different budget constraints when we fix $N=50$. When $B=50$ RMB, the 1-Pachira lottree has a slightly higher ACP than the other two mechanisms; when $B=100$ RMB, the 1-Pachira lottree has a similar ACP as the 5-Pachira lottree, which is clearly higher than the Sharing-Pachira lottree; when $B=500$ RMB, the Sharing-Pachira lottree has the highest ACP, which is clearly higher than the other two mechanisms.

\textbf{Summary:} In essence, the above results are almost all consistent with the CPT and the analysis in Sec. \ref{subsec:mechanism selection}, which also validate our important theoretical guidance to the mechanism selection in Sec. \ref{subsec:mechanism selection}. Note that, some results seem not to be very matched with the theoretical analysis or our intuition. For example, the budget 100 RMB results in lower RPR, TCP, and RPR than the budget 50 RMB when we fix $N=50$. It might be due to the impact of task release time or order. For another, the 5-Pachira lottree is sometimes best but sometimes worst. It exhibits slight instability, the reason of which is difficult, if not impossible, to understand as human psychology and behavior are themselves very complex. Nevertheless, it does not affect the obvious regularity from our experimental results that is indeed very matched with our theoretical guidance in Sec. \ref{subsec:mechanism selection}.

\begin{figure}[!t]
  \centering{
    \includegraphics[width=3.2in]{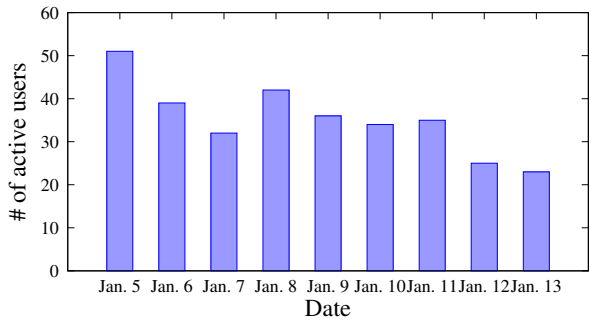}
  }
  \caption{Changes in the number of active users in 9 days.}
  \label{fig_activeUsers}
  \vspace{-10pt}
\end{figure}

\begin{figure}[!t]
  \centering{
    \includegraphics[width=2.5in]{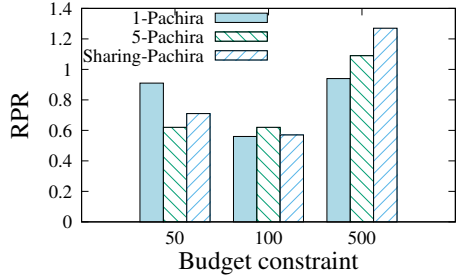}
  }
  \caption{Relative participation ratio under different budget constraints when the limit on the number of participants is fixed as $N=50$.}
  \label{fig_RPR}
  \vspace{-10pt}
\end{figure}

\begin{figure}[!t]
  \centering{
    \includegraphics[width=2.5in]{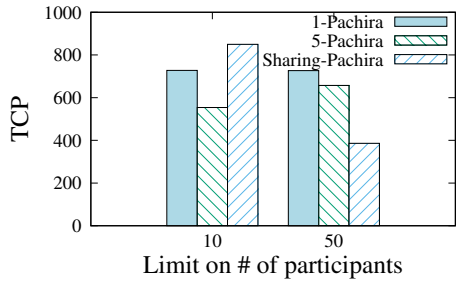}
  }
  \caption{Total contribution of participants under different limits on the number of participants when the budget is fixed as $B=100$ RMB.}
  \label{fig_con1}
  \vspace{-10pt}
\end{figure}

\begin{figure}[!t]
  \centering{
    \includegraphics[width=2.5in]{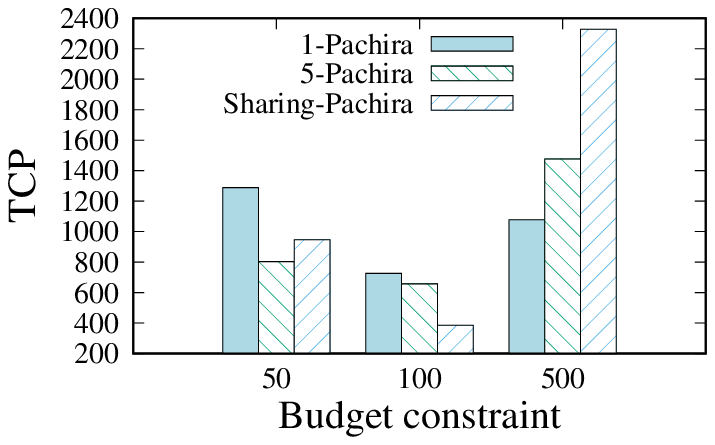}
  }
  \caption{Total contribution of participants under different budget constraints when the limit on the number of participants is fixed as $N=50$.}
  \label{fig_con2}
  \vspace{-10pt}
\end{figure}

\begin{figure}[!t]
  \centering{
    \includegraphics[width=2.5in]{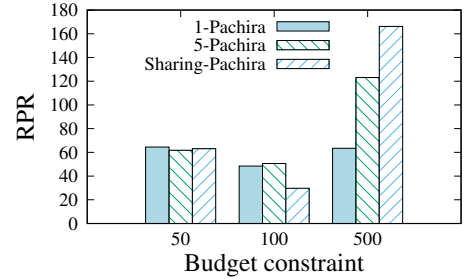}
  }
  \caption{Average contribution of participants under different budget constraints when the limit on the number of participants is fixed as $N=50$.}
  \label{fig_avgCon}
  \vspace{-10pt}
\end{figure} 

\section{Related Work}
\label{sec:related work}
Many incentive mechanisms have been proposed for crowdsourcing \cite{yang2012crowdsourcing,duan2012incentive,lee2010sell,jaimes2012location,zhang2015incentivize,zhang2015truthful,zhao2016budget,zhao2017frugal,guo2017taskme}. Generally, these mechanisms fall into two categories: the crowdsourcer-centric mechanisms, where the crowdsourcer provides a fixed reward to participates, and the user-centric mechanisms, where users have their reserve prices for crowdsourcing services. For the former, a Stackelberg game is often used by assuming that the costs of participants or their probability distribution is known \cite{yang2012crowdsourcing,duan2012incentive}. For the latter, various types of auctions are often used \cite{lee2010sell,jaimes2012location,zhang2015incentivize,zhang2015truthful,zhao2016budget,zhao2017frugal,guo2017taskme}. Lee and Hoh \cite{lee2010sell} designed a dynamic auction mechanism for purchasing users' sensing data. Jaimes et al. \cite{jaimes2012location} further considered a budget constraint and users' locations. Yang et al. \cite{yang2012crowdsourcing} proposed the \emph{MSensing} auction mechanism, and proved that it satisfied several properties including computational efficiency, individual rationality, profitability, and truthfulness. Zhang et al. \cite{zhang2015incentivize} proposed an auction mechanism for incentivizing crowd workers to label a set of binary tasks under a strict budget constraint. Zhang et al. \cite{zhang2015truthful} considered three auction models, which involve cooperation and competition among users. Zhao et al. proposed two kinds of online auction mechanisms: budget-feasible mechanisms \cite{zhao2016budget} and frugal mechanisms \cite{zhao2017frugal}. Guo et al. \cite{guo2017taskme} proposed a dynamic and quality-enhanced auction mechanism.

On the other hand, some incentive tree mechanisms have been investigated in various fields. Emek et al. \cite{emek2011mechanisms} presented multi-level marketing mechanisms that motivate participants to promote a certain product among their friends through social networks. Drucker and Fleischer \cite{drucker2012simpler} proposed a family of multi-level marketing mechanisms that preserve natural properties and are simple to implement. Chen et al. \cite{chen2013sybil} designed efficient sybil-proof incentive mechanisms, called the direct referral mechanisms, for retrieving information from networked agents. Zhang et al. \cite{zhang2015sybil} proposed a sybil-proof incentive tree mechanism for crowdsourcing scenarios where the contribution model is considered to be submodular and time-sensitive. Lv and Moscibroda \cite{lv2016fair} presented two families of incentive tree mechanisms for crowdsourcing, where each family achieves a set of desirable properties. Zhang et al. \cite{zhang2017robust} designed an auction-based incentive tree mechanism for mobile crowd sensing which combines the advantages of auctions and incentive trees. However, all of these studies failed to account for a budget constraint. To the best of our knowledge, only the early work \cite{douceur2007lottery} designed a class of incentive tree mechanisms with budget constraint, but they violate BC and allow only one winner.

Besides, some studies have been conducted to examine incentive mechanisms by real-world experiments. Reddy et al. \cite{reddy2010examining} examined various micro-payment schemes from a pilot study in a university campus sustainability initiative. Musthag et al. \cite{musthag2011exploring} et al. used a combination of statistical analysis and models from
labor economics to evaluate three micro-payment schemes in the context of high-burden user studies. Celis et al. \cite{celis2013lottery} investigated the benefits and potential pitfalls in employing a lottery-based payment mechanism for crowdsourcing via experiments on \emph{MTurk}. Rula et al. \cite{rula2014no} compared micro-payments and lottery based schemes by using data from a large, 2-day experiment with 96 participants at a corporate conference. Rokicki et al. \cite{rokicki2014competitive} compared three classes of reward schemes, \emph{linear reward}, \emph{competitive-based}, and \emph{lottery-based} by large-scale experimental evaluations. They further investigated how team mechanisms can be leveraged to improve the cost efficiency of crowdsourcing \cite{rokicki2015groupsourcing}. However, all of these studies lacked a general and solid theoretical basis to account for their experimental results, and none of them considered incentive tree mechanisms.

\section{Conclusion}
\label{sec:conclusion}
In this paper, we investigated budget-consistent incentive tree mechanisms for crowdsourcing. We proposed three types of \emph{generalized lottree} mechanisms, \emph{1-Pachira}, \emph{$K$-Pachira}, and \emph{Sharing-Pachira} for allowing one winner, multiple winners, and each participant to be a winner, respectively. We proved that our mechanisms satisfy BC, CCI, CSI, VPC, USB and USA. A theoretical guidance to the mechanism selection was provided for satisfying different requirements. Both extensive simulations and realistic experiments were conducted to confirm our theoretical analysis.

\bibliographystyle{IEEEtran}
\bibliography{myRef}

\end{document}